\documentclass[12pt,a4paper]{article}

\usepackage{flonat-wp}
\usepackage[harvard]{flonat-bib}

\usetikzlibrary{positioning,arrows.meta,calc,fit,backgrounds}

\newcommand{\eps}{\varepsilon}
\newcommand{\Adv}{\mathcal{A}}
\newcommand{\Provider}{\mathcal{P}}
\newcommand{\Auditor}{\mathcal{V}}
\newcommand{\Index}{D}
\newcommand{\colluders}{k}

\newtheorem{observation}[theorem]{Observation}
\crefname{observation}{Observation}{Observations}
\Crefname{observation}{Observation}{Observations}

\title{Auditing Privacy in Multi-Tenant RAG under Account Collusion}

\author{
    Florian Burnat\thanks{University of Bath, Bath, UK. \texttt{fadb20@bath.ac.uk}.}
}

\date{31 May 2026}

\begin{document}

\begin{titlepage}
    \maketitle
    \begin{abstract}
        \noindent
Multi-tenant RAG services often treat the account as the privacy boundary: each account receives an $(\eps_{\text{acc}},\delta_{\text{acc}})$-DP retrieval guarantee against the tenant index. We show that this framing understates leakage under same-index account collusion. For Gaussian noise-then-select retrieval, $k$ coordinated same-tenant accounts compose to joint leakage $\Theta(\sqrt{k}\,\eps_{\text{acc}})$, not $\eps_{\text{acc}}$; we give a matching membership-inference attack and validate the predicted $\sqrt{k}$ AUC trend in scalar, top-$K$, trained-embedder, and production-scale HNSW settings. We then give a verifier-runnable audit protocol that attests noise-then-select retrieval and reports $(\textsf{PASS},\eps_{\text{audit}})$ for coalitions up to a declared cap $k_{\max}$, without disclosing the index or changing the retrieval decision rule. The claim is retrieval-channel only: generation-channel leakage and adversarially robust coalition-size estimation are complementary audit predicates.
    \end{abstract}
    \setcounter{page}{0}
    \thispagestyle{empty}
\end{titlepage}

\clearpage
\doublespacing

\section{Introduction}
\label{sec:introduction}

Production retrieval-augmented generation services, including Microsoft 365 Copilot, OpenAI Assistants, Anthropic Files API, and the managed-RAG products built atop Pinecone, Weaviate, and similar vector databases, are overwhelmingly multi-tenant. A single deployed embedder, a single deployed generator, and a fleet of per-tenant indices serve thousands to millions of customer accounts under a shared infrastructure. The DP-RAG designs proposed and audited at this architecture reason at per-account granularity: each account is rate-limited, each query is scored against only that account's tenant index, and retrieval scores are noised to deliver an $(\eps_{\text{acc}}, \delta_{\text{acc}})$-differential-privacy guarantee per account over an audit window \parencite{Cheng2025-remoterag,Zeng2024-good}. Our concern is the adequacy of the per-account framing where it applies, not that every named service formally advertises such a guarantee.

This per-account framing is computationally convenient and aligns with how providers reason about API contracts. It is also the framing implicit in the recent wave of single-attacker membership-inference work \parencite{Li2025-generating,Wang2025-leaks,Feng2025-leak,Liu2024-mask,Naseh2025-riddle,Gao2025-dcmi}, in which one isolated attacker is treated as the unit of analysis, competing on AUC at a fixed query budget under a $\colluders=1$ threat model. Existing defenses \parencite{Cheng2025-remoterag} likewise modify the deployment to leak less per query; however, none formalize the changes when the assumption of a single attacker is relaxed.

\paragraph{The gap} Production RAG accounts are cheap. An adversary with a budget for $\colluders$ disposable accounts, whether a malicious customer creating sock puppets, a malicious bystander tenant probing across the boundary, or an external attacker buying API access through resellers, can pool $\colluders$ accounts' query budgets against a single target tenant. The provider's per-account DP guarantee provides no direct information about what the coalition observes jointly. Worse, the correct benchmark is unclear: a basic group-privacy bound provides $\eps_{\text{joint}} \leq \colluders \cdot \eps_{\text{acc}}$, but this is loose; a naive ``per-account independence'' reading provides $\eps_{\text{joint}} \approx \eps_{\text{acc}}$, but this is incorrect. Neither extreme is the right benchmark: our results show the correct rate is $\Theta(\sqrt{\colluders}\cdot\eps_{\text{acc}})$. Concretely, a service advertising $\eps_{\text{acc}} = 1$ per account can in fact leak as if $\eps \approx 3.16$ to a ten-account coalition at realistic $\delta_{\text{acc}} = 10^{-6}$, even though no single account exceeds its budget; for a fifty-account coalition the joint $\eps$ rises to ${\approx}7.07$ at the same per-account budget --- the privacy debt the provider implicitly takes on by failing to bound coalition size.

\paragraph{Contributions}
\begin{enumerate}[leftmargin=1.5em,label=\textbf{C\arabic*.}]
    \item \emph{Threat model: same-index multi-account collusion (\Cref{sec:threat-model,sec:dp-analysis}).} We formalize same-index multi-account collusion for multi-tenant RAG, define $(\eps, \delta, \colluders)$-collusion DP, and show that per-account DP structurally understates coalition leakage: for Gaussian noise-then-select retrieval, joint leakage scales as $\Theta(\sqrt{\colluders}\cdot\eps_{\text{acc}})$ (\Cref{thm:upper,thm:lower,cor:tightness}), unconditional for same-tenant collusion and under M4 (\Cref{rem:access-control}) otherwise. The composition rate specializes prior DP theory~\parencite{DworkRothblumVadhan2010,KairouzOhViswanath2017,VadhanWang2021-concurrent}; the contribution is the threat model and the bridge to RAG-specific MIA.
    \item \emph{Falsifiable empirical predictions (\Cref{sec:dp-analysis,sec:attacks}).} The DP-to-MIA reduction (\Cref{cor:mia-upper}) predicts membership-inference AUC of $\tfrac{1}{2} + \Theta(\sqrt{\colluders} \cdot \eps_{\text{acc}})$ at fixed per-account budget. We commit to two falsification gates (P1, P3, \Cref{subsec:mia-bridge}) and validate the $\sqrt{\colluders}$ trend across scalar, top-$K$, trained-embedder, and production-scale HNSW settings.
    \item \emph{Verifier-runnable audit protocol (\Cref{sec:audit-protocol}).} We give a four-phase protocol that attests noise-then-select retrieval and issues a $(\textsf{PASS}, \eps_{\text{audit}})$ verdict without changing the retrieval decision rule or disclosing index, model, or pipeline. \Cref{thm:audit-soundness} shows that, if the true coalition size is at most the declared cap $\colluders_{\max}$, the verdict implies $(\eps_{\text{audit}}, \delta_{\text{policy}})$-collusion-DP. The coalition-size estimator is an operational gate: \Cref{lem:estimator-guarantee} calibrates it on P-A/B/C coalition patterns, while adversarially diversified coalitions remain a stated limitation. We also identify and correct a subtle $\delta$-accounting error in naive multi-account composition.
\end{enumerate}

\section{Threat Model}
\label{sec:threat-model}

We formalize multi-tenant retrieval-augmented generation (RAG) under coordinated account collusion as a Stackelberg-like game between a service provider $\Provider$ and an adversary $\Adv$ controlling $\colluders$ accounts. The provider commits to a privacy policy, namely, per-account rate limits and a differentially private retrieval scoring mechanism, and the adversary best-responds with a coordinated query strategy. Our analysis (\Cref{sec:dp-analysis}) shows that the provider's per-account composition bound understates the total leakage, and the audit protocol (\Cref{sec:audit-protocol}) closes the resulting gap.

\subsection{System model}
\label{subsec:system-model}

A multi-tenant RAG service is a tuple $\Provider = (\Index, \mathsf{Emb}, \mathsf{LLM}, \mathsf{Policy})$ where:
\begin{itemize}[leftmargin=*]
    \item $\Index = \{\Index_t\}_{t \in T}$ is a family of per-tenant indices, $\Index_t \subseteq \mathcal{X}$, partitioned by tenant identifier $t \in T$.
    \item $\mathsf{Emb}: \mathcal{X} \to \mathbb{R}^d$ is a shared embedder.
    \item $\mathsf{LLM}: \mathcal{Q} \times \mathcal{X}^K \to \mathcal{Y}$ is a shared generator that conditions on a query $q$ and the top-$K$ retrieved documents.
    \item $\mathsf{Policy} = (r, \eps_{\text{acc}}, \delta_{\text{acc}}, \sigma)$ specifies the per-account rate limit $r$ (queries per audit window), per-account privacy budget $(\eps_{\text{acc}}, \delta_{\text{acc}})$, and noise parameter $\sigma$ on retrieval scores.
\end{itemize}

\Cref{fig:system-model} sketches the data flow. Accounts (rounded squares) live under tenants (dashed boundary boxes) and issue queries through a shared service; the service embeds each query, scores it against the requesting tenant's index, perturbs the scores with Gaussian noise \emph{before} top-$K$ selection, and feeds the selected documents to the LLM. A $\colluders$-coalition (red dashed link) is a subset of accounts that pool their responses through a shared output channel after the LLM has returned them, which the service cannot observe and the per-account rate limiter does not constrain.

\begin{figure}[t]
    \centering
    \begin{tikzpicture}[
        font=\footnotesize,
        every node/.style={inner sep=2pt},
        acc/.style={rounded corners=2pt, draw, minimum width=4mm, minimum height=4mm, fill=white},
        box/.style={draw, rounded corners=2pt, minimum height=6mm, minimum width=10mm, align=center, fill=white},
        bigbox/.style={draw, rounded corners=3pt, minimum height=14mm, minimum width=18mm, align=center, fill=blue!4},
        tenant/.style={draw, dashed, rounded corners=3pt, inner sep=4pt, fill=gray!8},
        flow/.style={-{Stealth[length=1.6mm]}, thick},
        collude/.style={red, dashed, thick},
    ]
        % Accounts grouped by tenant on the left
        \node[acc] (a1) at (0, 1.2) {};
        \node[acc, right=1mm of a1] (a2) {};
        \node[font=\scriptsize\itshape, above=0.5mm of $(a1)!0.5!(a2)$] {tenant $t_1$};

        \node[acc] (a3) at (0, 0) {};
        \node[acc, right=1mm of a3] (a4) {};
        \node[acc, right=1mm of a4] (a5) {};
        \node[font=\scriptsize\itshape, below=0.5mm of a4] {tenant $t_2$};

        \begin{scope}[on background layer]
            \node[tenant, fit=(a1)(a2), label={[font=\scriptsize\itshape]above:}] {};
            \node[tenant, fit=(a3)(a4)(a5)] {};
        \end{scope}

        % Service column
        \node[bigbox, right=12mm of a4] (svc) {RAG service};
        \node[box, below=1mm of svc.south, anchor=north, minimum width=18mm, fill=blue!4]
            (svcinner) {$\mathsf{Emb} \rightarrow \mathsf{Score} \rightarrow +\mathcal{N}(0,\sigma^2) \rightarrow \mathsf{TopK}$};

        % LLM
        \node[box, right=10mm of svc, minimum width=11mm, minimum height=10mm] (llm) {LLM};

        % Output channel
        \node[font=\scriptsize, right=8mm of llm, align=center] (out) {output\\channel};

        % Flows: accounts -> service
        \draw[flow] (a2.east) -- ($(svc.west)+(0,2mm)$);
        \draw[flow] (a5.east) -- ($(svc.west)+(0,-2mm)$);

        % service -> LLM -> output
        \draw[flow] (svc.east) -- (llm.west)
            node[midway, above, font=\scriptsize] {top-$K$};
        \draw[flow] (llm.east) -- (out.west);

        % Output back to accounts (per-account)
        \draw[flow, gray] (out.south) .. controls +(0,-6mm) and +(0,-6mm) ..
            ($(a4.south)+(0,-1mm)$);

        % Coalition link
        \draw[collude] (a1) .. controls +(0,-3mm) and +(0,3mm) .. (a5)
            node[midway, font=\scriptsize, red, fill=white, inner sep=0.5pt] {coalition $\colluders$};

        % Rate limiter annotation
        \node[font=\scriptsize, gray, above=0.5mm of svc.north] {per-account rate limit $r$};
    \end{tikzpicture}
    \caption{Multi-tenant RAG with a $\colluders$-account coalition. The service applies the rate limit per account and adds DP noise before top-$K$; coalitions pool outputs across per-account rate limits and privacy budgets after retrieval (red dashed link).}
    \label{fig:system-model}
\end{figure}

For a query $q$ from account $a$ with tenant $t(a)$, the service runs the retrieval mechanism
\begin{equation}
    \mathsf{Retr}(q, a) = \mathsf{TopK}\bigl(\mathsf{Score}(q, \Index_{t(a)}) + \mathcal{N}(0, \sigma^2 I)\bigr),
    \label{eq:retrieval}
\end{equation}
adding Gaussian noise \emph{before} top-$K$ selection,\footnote{Top-$K$ on noisy scores is a post-processing step over the DP output, so the privacy loss is non-increasing. Selecting first and noising only the $K$ winners breaks this --- the unnoised ordering leaks through the selected indices. The audit protocol attests the correct ordering.} and then returns $\mathsf{RAG}(q, a) = \mathsf{LLM}(q, \mathsf{Retr}(q, a))$. Here $\mathcal{N}(0, \sigma^2 I)$ denotes i.i.d.\ Gaussian noise on each coordinate of the $|\Index_{t(a)}|$-dimensional score vector; the scalar lower-bound mechanism $\mathcal{M}^*$ of \Cref{thm:lower} is the projection onto the single coordinate corresponding to the differing document. The per-account $(\eps_{\text{acc}}, \delta_{\text{acc}})$ guarantee and \Cref{thm:upper} govern $\mathsf{Retr}$; generation is out of scope (\Cref{subsec:scope-of-dp}).

\subsection{Provider strategy}
\label{subsec:provider-strategy}

The provider commits, before the audit window opens, to a policy $\mathsf{Policy}$ together with public commitments
\begin{equation}
    C_{\text{emb}} = H(\mathsf{Emb} \,\Vert\, \mathsf{prompt\_tmpl} \,\Vert\, \mathsf{tokenizer} \,\Vert\, \text{schema}(\Index)),
\end{equation}
\begin{equation}
    C_{\text{ledger},0} = H(\text{empty})
\end{equation}
where $C_{\text{ledger}}$ is the root of an append-only Merkle ledger of per-account query records (\Cref{sec:audit-protocol}). The provider's claimed privacy guarantee is that, for any single account $a$ over the audit window, the joint distribution of $a$'s \emph{retrieval-channel} transcripts ($\mathsf{Retr}$ outputs of \eqref{eq:retrieval}) satisfies $(\eps_{\text{acc}}, \delta_{\text{acc}})$-DP w.r.t.\ neighboring index pairs $\Index_{t(a)} \sim \Index'_{t(a)}$ differing by the addition, removal, or replacement of exactly one document (the standard add/remove/replace relation, $|\Index_{t(a)} \triangle \Index'_{t(a)}| \leq 2$); the per-account L2 sensitivity of $\mathsf{Score}(q, \cdot)$ over this relation is bounded by $1$ for unit-norm embeddings. Generation-channel privacy is out of scope (\Cref{subsec:scope-of-dp}).

\subsection{Adversary model}
\label{subsec:adversary-model}

The adversary $\Adv$ controls $\colluders \geq 1$ accounts $\{a_1, \ldots, a_\colluders\}$, each registered to a tenant $t(a_i)$ chosen by $\Adv$. We consider three regimes:
\begin{description}[leftmargin=1.5em,style=nextline]
    \item[Same-tenant collusion ($t(a_i) = t^* \forall i$)] $\Adv$ coordinates $\colluders$ accounts within a single tenant (e.g., internal users of an organization sharing an index $\Index_{t^*}$ of internally contributed documents) to extract documents from that tenant's index. Per-account DP is the contractual boundary the coalition crosses; the victim is other users of the same tenant.
    \item[Cross-tenant collusion ($t(a_i)$ varies)] $\Adv$ coordinates across tenants to extract another tenant's documents via shared infrastructure. Models a malicious bystander tenant.
    \item[External collusion ($t(a_i) \notin T_{\text{victim}}$)] $\Adv$ has no legitimate access to the victim tenant's index but uses $\colluders$ unrelated accounts. Models an external attacker.
\end{description}

\Cref{tab:regimes} summarises how each regime relates to the M4 access-control assumption (\Cref{rem:access-control}) and the resulting leakage about the victim index $\Index_{t^*}$.

\begin{table}[h]
    \centering
    \footnotesize
    \caption{Collusion regimes vs.\ M4 assumption. The DP analysis of \Cref{sec:dp-analysis} applies unconditionally to same-tenant collusion; cross-tenant and external regimes are vacuous about $\Index_{t^*}$ without M4 and identical to same-tenant under M4.}
    \label{tab:regimes}
    \begin{tabular}{l l c l}
    \toprule
    Regime & Tenant relation & M4? & Leakage about $\Index_{t^*}$ \\
    \midrule
    Same-tenant   & all $t(a_i){=}t^*$        & No  & $\Theta(\sqrt{\colluders}\,\eps_{\text{acc}})$ \\
    Cross-tenant  & some $t(a_i){\neq}t^*$    & Yes & same under M4; else $0$ \\
    External      & no $t(a_i){=}t^*$         & Yes & same under M4; else $0$ \\
    \bottomrule
    \end{tabular}
\end{table}

The adversary's capabilities are:
\begin{enumerate}[leftmargin=*,label=(C\arabic*)]
    \item \label{cap:posterior} \emph{Shared posterior}: all $\colluders$ accounts share a single probabilistic belief over the target index contents.
    \item \label{cap:adaptive} \emph{Adaptive coordinated queries}: the query of account $a_i$ at time $t$ may depend on the full transcript of all accounts up to time $t-1$.
    \item \label{cap:rate-aware} \emph{Rate-limit-aware scheduling}: $\Adv$ knows $r$ and schedules queries to remain within per-account limits while saturating aggregate budget.
    \item \label{cap:output} \emph{Shared output channel}: all responses across accounts are pooled before $\Adv$'s estimator runs.
\end{enumerate}

\paragraph{Modeling assumptions} The analysis of \Cref{sec:dp-analysis} treats the following as standing assumptions on the deployment: (\textbf{M1}) \emph{Independent fresh randomness}: each query consumes an independent draw of Gaussian noise; in particular, no two accounts (colluding or otherwise) share noise samples. (\textbf{M2}) \emph{Audit-window disjointness}: each audit window $W$ has its own per-account budget; the per-account $(\eps_{\text{acc}}, \delta_{\text{acc}})$ resets at window open, and adversary coordination is scoped to a single window unless explicitly noted. (\textbf{M3}) \emph{Embedder determinism within a window}: the embedder $\mathsf{Emb}$ and its parameters are fixed for the duration of $W$ and identical across tenants. (\textbf{M5}) \emph{Honest-account receipt sample}: at least one non-colluding account submits its query-time receipts to $\Auditor$; ledger completeness (\Cref{subsec:audit-attest}) is verified against this sample. These assumptions are standard for the DP-RAG setting and are enforced cryptographically by the audit protocol of \Cref{sec:audit-protocol} via $C_{\text{emb}}, C_{\text{policy}}$, and R2 receipts.

\begin{remark}[M4: same-index access for cross-tenant/external regimes]
\label{rem:access-control}
\Cref{eq:retrieval} confines an account's queries to $\Index_{t(a)}$, so under \emph{normal} access control cross-tenant and external accounts have zero leakage about the victim index $\Index_{t^*}$. We retain those regimes only under an explicit access-control-failure assumption --- (\textbf{M4}) \emph{Same-index access}: the attacker's accounts share retrieval access to $\Index_{t^*}$ via a shared backend (e.g., a misconfigured router or an over-broad shared index). Under M4, all three regimes are formally identical for \Cref{thm:upper}; the formal claims and audit protocol apply unconditionally only to same-tenant collusion. \Cref{subsec:external-collusion} reports the M4-failure experiment as an upper bound under access-control failure, not a guarantee about ordinary external attackers.
\end{remark}

\subsection{Adversary goals}
\label{subsec:adversary-goals}

We focus on membership inference (MIA): given a candidate document $x^*$, decide whether $x^* \in \Index_t$ for a target tenant $t$ (measured as ROC AUC or queries-to-confidence at a fixed FPR). More severe goals --- document reconstruction, cross-tenant leakage --- compose with our MIA bound but are formally analyzed only under M4 (\Cref{rem:access-control}).

\subsection{Game formulation}
\label{subsec:game}

The interaction unfolds over an audit window $W$ in the four phases specified in \Cref{sec:audit-protocol}: commitment, per-query attestation, coalition-size estimation, and verification.

\begin{definition}[$(\eps, \delta, \colluders)$-collusion DP]
\label{def:k-collusion-dp}
A multi-tenant RAG mechanism $\mathsf{RAG}$ is $(\eps, \delta, \colluders)$-collusion-DP with respect to tenant $t^*$ if, for any $\colluders$ adaptively-coordinated accounts $\{a_1, \ldots, a_\colluders\}$ satisfying \ref{cap:posterior}--\ref{cap:output}, any neighboring index pair $\Index \sim_{t^*} \Index'$ differing in one document of tenant $t^*$, and any output set $S$,
\begin{equation}
    \Pr[\mathsf{Transcript}_\Adv(\Index) \in S] \leq e^{\eps} \cdot \Pr[\mathsf{Transcript}_\Adv(\Index') \in S] + \delta.
\end{equation}
\end{definition}

\Cref{def:k-collusion-dp} sits between standard per-account DP (which is the $\colluders=1$ special case) and the operationally-meaningless $\eps$-pure DP against an unbounded coalition. Our central result (\Cref{sec:dp-analysis}) bounds the achievable $\eps$ under coordination capabilities \ref{cap:posterior}--\ref{cap:output} and shows it grows as $\Theta(\sqrt{\colluders})$ rather than the $\Theta(\colluders)$ bound a naive per-account composition argument would suggest.

\section{DP Composition Under Collusion}
\label{sec:dp-analysis}

This section specializes standard DP composition / concurrent-composition theory~\parencite{DworkRothblumVadhan2010,KairouzOhViswanath2017,VadhanWang2021-concurrent} to multi-tenant RAG: a service whose per-account $(\eps_{\text{acc}}, \delta_{\text{acc}})$-DP guarantee comes from per-query Gaussian noise (\Cref{def:gaussian-class}) leaks $\Theta(\sqrt{\colluders} \cdot \eps_{\text{acc}})$ under $\colluders$-account collusion against the same index. The bound is sublinear in $\colluders$ relative to the trivial group-privacy bound $\colluders \cdot \eps_{\text{acc}}$, but \emph{strictly larger} than the per-account budget the provider advertises; the gap is the privacy debt the provider implicitly takes on by failing to bound coalition size.

\subsection{Preliminaries}

\begin{definition}[Differential privacy \parencite{DworkRoth2014}]
\label{def:dp}
A randomized mechanism $\mathcal{M}: \mathcal{X}^* \to \mathcal{Y}$ is $(\eps, \delta)$-DP if for all neighboring datasets $\Index \sim \Index'$ and all measurable $S \subseteq \mathcal{Y}$,
\(\Pr[\mathcal{M}(\Index) \in S] \leq e^{\eps} \Pr[\mathcal{M}(\Index') \in S] + \delta.\)
\end{definition}

We use the advanced composition theorem in the form of \parencite{DworkRothblumVadhan2010}.

\begin{lemma}[Advanced composition]
\label{lem:advcomp}
For any $\eps_0 \in (0,1]$ and $\delta_0, \delta' \in (0,1)$, the $m$-fold adaptive composition of $(\eps_0, \delta_0)$-DP mechanisms on the same database is $(\eps', m\delta_0 + \delta')$-DP for
\(\eps' = \sqrt{2 m \log(1/\delta')} \cdot \eps_0 + m \eps_0 (e^{\eps_0} - 1).\)
For $\eps_0$ small, the second term is $O(m \eps_0^2)$ and dominated by the first; we abbreviate $\eps' = O\bigl(\sqrt{m \log(1/\delta')} \cdot \eps_0\bigr)$.
\end{lemma}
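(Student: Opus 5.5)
We prove \Cref{lem:advcomp} by the standard privacy-loss random-variable argument of \parencite{DworkRothblumVadhan2010}. First we strip off the $\delta_0$ terms, then we bound the pure-DP part with a martingale concentration argument.

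\textbf{Step 1: reduce to pure DP.} For a fixed neighboring pair $\Index \sim \Index'$, every $(\eps_0,\delta_0)$-DP mechanism can be decomposed as follows. There is an event of probability at most $\delta_0$ (under either input) outside of which the output distributions $P_i, Q_i$ satisfy the pure $\eps_0$ ratio bound pointwise. Concretely, we use the standard lemma that $(\eps_0,\delta_0)$-indistinguishability of $P,Q$ implies the existence of $P',Q'$ with $\mathrm{TV}(P,P'),\mathrm{TV}(Q,Q')\le\delta_0$ and $P',Q'$ $\eps_0$-indistinguishable. We apply this lemma round by round, conditioned on the adaptive history. A union bound over the $m$ rounds then shows that the real transcripts are within total variation $m\delta_0$ of a hybrid transcript in which every round is pure $\eps_0$-DP. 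This accounts for the additive $m\delta_0$.

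\textbf{Step 2: bound the privacy loss of the pure hybrid.} Let $L=\sum_{i=1}^m L_i$, where $L_i=\log\frac{P_i(y_i\mid y_{<i})}{Q_i(y_i\mid y_{<i})}$ is the privacy loss of round $i$. Pure $\eps_0$-DP gives $|L_i|\le\eps_0$ almost surely. The key fact is the conditional-expectation bound $\mathbb{E}[L_i\mid y_{<i}]\le \eps_0(e^{\eps_0}-1)$. This is the KL divergence between two $\eps_0$-indistinguishable distributions, and we bound it by symmetrizing: $\mathrm{KL}(P\|Q)\le \mathrm{KL}(P\|Q)+\mathrm{KL}(Q\|P)=\sum (P-Q)\log(P/Q)\le \eps_0\sum|P-Q|$. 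It then suffices to bound the total variation by $(e^{\eps_0}-1)$, which is routine from the ratio constraint.

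\textbf{Step 3: concentration.} The centered sums $\sum_i (L_i-\mathbb{E}[L_i\mid y_{<i}])$ form a martingale whose increments are bounded by $2\eps_0$; more precisely, each increment lies in an interval of length $2\eps_0$. Azuma--Hoeffding gives $\Pr[L> m\eps_0(e^{\eps_0}-1)+\eps_0\sqrt{2m\log(1/\delta')}]\le\delta'$. The interval-length form of Hoeffding's lemma is what yields the constant $\sqrt{2}$ rather than $\sqrt{8}$, so we will be careful to use it.

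\textbf{Step 4: convert the tail bound to $(\eps',\delta')$-DP.} For any $S$, split $S$ into the part where $L\le\eps'$ and the bad set where $L>\eps'$. On the first part, $P(S)\le e^{\eps'}Q(S)$. The bad set has probability at most $\delta'$. Adding the $m\delta_0$ from Step 1 gives $(\eps',m\delta_0+\delta')$-DP.

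\textbf{Main obstacle and abbreviation.} The delicate step is Step 1 under adaptivity. We must ensure that the per-round decomposition can be chosen measurably as a function of the history, and that the hybrid remains a valid adaptive composition of pure-DP mechanisms. We would follow Dwork--Rothblum--Vadhan's hybrid construction directly for this. Finally, the abbreviation is immediate: $e^{\eps_0}-1\le 2\eps_0$ for $\eps_0\le1$, so the second term is $O(m\eps_0^2)$. This term is dominated by the first whenever $\eps_0\sqrt{m}\lesssim\sqrt{\log(1/\delta')}$.
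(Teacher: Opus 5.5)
The paper gives no proof of this lemma; it imports it from Dwork--Rothblum--Vadhan, and your outline is their privacy-loss/Azuma argument. Steps 2--4 are sound. Summing $|P_i-Q_i|\le(e^{\eps_0}-1)\min(P_i,Q_i)$ gives $\sum|P_i-Q_i|\le e^{\eps_0}-1$; note that the quantity you need to bound by $e^{\eps_0}-1$ is the full $\ell_1$ mass, not the half-$\ell_1$ total variation. The centered increments lie in a predictable interval of length $2\eps_0$, so Hoeffding--Azuma gives exactly $\eps_0\sqrt{2m\log(1/\delta')}$. The tail-to-DP conversion is correct. Your caveat that domination of the second term needs $\eps_0\sqrt{m}\lesssim\sqrt{\log(1/\delta')}$, not just small $\eps_0$, is also right.

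The gap is the $\delta$ bookkeeping between Steps 1 and 4. You perturb \emph{both} sides. The real transcripts $V$ (on $\Index$) and $V'$ (on $\Index'$) are each within total variation $m\delta_0$ of hybrids $V_h,V_h'$, which Steps 2--4 show are $(\eps',\delta')$-indistinguishable. Chaining then gives
\[
\Pr[V\in S]\le\Pr[V_h\in S]+m\delta_0\le e^{\eps'}\Pr[V_h'\in S]+\delta'+m\delta_0\le e^{\eps'}\Pr[V'\in S]+\delta'+(1+e^{\eps'})\,m\delta_0 ,
\]
because the return trip from $V_h'$ to $V'$ gets multiplied by $e^{\eps'}$. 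The two total-variation bounds alone cannot prevent this. So ``adding the $m\delta_0$ from Step 1'' proves only $(\eps',(1+e^{\eps'})m\delta_0+\delta')$, not the stated $m\delta_0+\delta'$.

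The repair is to perturb only the $\Index$ side, and symmetrically for the reverse inequality. For each round and history, choose $P_i'$ with $\mathrm{TV}(P_i,P_i')\le\delta_0$ and $e^{-\eps_0}Q_i\le P_i'\le e^{\eps_0}Q_i$ against the \emph{unmodified} $Q_i$. Such a $P_i'$ exists: realize $(P_i,Q_i)$ as a common post-processing of the Kairouz--Oh--Viswanath four-point pair, and move the $\delta_0$ atom charged only by the first law onto the atom charged only by the second. Steps 2--4 then compare $V_h$ directly with the real $V'$, and only one $m\delta_0$ is paid. An equivalent fix is a same-weight mixture decomposition $P_i=(1-\delta_0)P_i'+\delta_0E_i$, $Q_i=(1-\delta_0)Q_i'+\delta_0F_i$ with $P_i',Q_i'$ pure-$\eps_0$; there $(1-\delta_0)^m$ times the hybrid law on the $\Index'$ side is pointwise below the real one, so that side costs nothing.

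Relatedly, the informal opening of Step 1 is false in general. It claims an event of probability at most $\delta_0$ off which $P_i,Q_i$ \emph{themselves} satisfy the $\eps_0$ ratio pointwise. At $\eps_0=0$, $\mathrm{Bern}(1/2+\delta_0)$ versus $\mathrm{Bern}(1/2)$ violates the ratio everywhere. Only the perturbation form of the lemma is valid.
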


\subsection{Per-account guarantee and per-query calibration}

The service implements DP retrieval scoring using Gaussian noise on similarity scores (\Cref{sec:threat-model}). For a single account $a$ issuing $n$ adaptive queries against $\Index_{t(a)}$, the per-account guarantee $(\eps_{\text{acc}}, \delta_{\text{acc}})$ is achieved by calibrating the per-query noise so that each query is $(\eps_q, \delta_q)$-DP with
\begin{equation}
\label{eq:per-query-budget}
\eps_q = \frac{\eps_{\text{acc}}}{\sqrt{2 n \log(1/\delta_{\text{acc}})}}, \qquad \delta_q = \delta_{\text{acc}}/n,
\end{equation}
so that \Cref{lem:advcomp} applied to $a$'s $n$ queries yields the advertised $(\eps_{\text{acc}}, \delta_{\text{acc}})$-DP transcript guarantee.\footnote{The calibration absorbs per-query failure probabilities into the composition slack via $\delta_q = \delta_{\text{acc}}/n$ and $\delta' = \delta_{\text{acc}}$, giving a post-composition guarantee of $(\eps_{\text{acc}}, 2\delta_{\text{acc}})$ in the strictest reading. An equal $\delta$-split ($\delta_q = \delta_{\text{acc}}/(2n)$, $\delta' = \delta_{\text{acc}}/2$) recovers the exact $(\eps_{\text{acc}}, \delta_{\text{acc}})$ guarantee with a constant tightening of $\sqrt{\log(2/\delta_{\text{acc}})/\log(1/\delta_{\text{acc}})} \approx 1.025$ at $\delta_{\text{acc}} = 10^{-6}$ --- numerically negligible at the regime we test. The privacy filter and odometer constructions of \parencite{Rogers2016-odometers} target the adaptive-budget setting and do not improve fixed-budget constants; the leading-order tightening of the Gaussian mechanism itself would come from the analytic calibration of \parencite{Balle2018} (see \Cref{subsec:lower-bound}).}

We formalize this class for use in \Cref{thm:upper} as follows:

\begin{definition}[Gaussian-noised score-release class]
\label{def:gaussian-class}
A RAG mechanism is in this class if its per-account $(\eps_{\text{acc}}, \delta_{\text{acc}})$ guarantee is obtained by composing $n$ independent per-query Gaussian-noised score mechanisms with sensitivity $\Delta$ and noise scale $\sigma$ calibrated to per-query $(\eps_q, \delta_q)$-DP as in \eqref{eq:per-query-budget}, with per-query noise draws independent across queries and accounts (M1).
\end{definition}

\Cref{thm:upper} is stated for this class only; opaque transcript bounds without per-query accounting yield only the loose $\colluders \cdot \eps_{\text{acc}}$ group-privacy bound. Class membership is verified cryptographically by the audit protocol in \Cref{sec:audit-protocol} before $\eps_{\text{audit}}$ is issued.

\subsection{Coordination reduces to a single adaptive analyst}

Let $\Adv = (a_1, \ldots, a_\colluders)$ be a $\colluders$-collusion adversary satisfying capabilities \ref{cap:posterior}--\ref{cap:output} from \Cref{sec:threat-model}. The transcript $\tau_\Adv$ pools all responses from all accounts into a single sequence that is visible to a unified estimator.

\begin{lemma}[Coordination reduction]
\label{lem:reduction}
For any $\colluders$-collusion adversary $\Adv$ against $\Index_{t^*}$, there exists a single adaptive analyst $\Adv^\dagger$ with query budget $\colluders n$ against $\Index_{t^*}$ such that $\tau_\Adv \stackrel{d}{=} \tau_{\Adv^\dagger}$.
\end{lemma}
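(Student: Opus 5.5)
The plan is a direct simulation argument: build $\Adv^\dagger$ so that it internally runs all $\colluders$ account-strategies of $\Adv$ and forwards their queries to the same retrieval oracle. I will model the coalition's interaction as a sequence of rounds. In each round, the shared-posterior and adaptive-coordination capabilities \ref{cap:posterior}--\ref{cap:adaptive} let $\Adv$ choose the next pair (account $a_i$, query $q$) as a (possibly randomized) function of the pooled transcript so far. The rate-limit-aware scheduling capability \ref{cap:rate-aware} fixes this interleaving, subject to each account issuing at most $n$ queries. The analyst $\Adv^\dagger$ samples $\Adv$'s internal coins once and maintains the pooled transcript. At each round it computes the same next pair, submits $q$ to its own oracle $\mathsf{Retr}(\cdot)$ on $\Index_{t^*}$, and appends the response labelled with $a_i$. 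Because it simulates at most $n$ queries per account, its total budget is at most $\colluders n$.

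The key step is to show that the oracle answers have the same conditional law in both worlds, and I will do this by induction on rounds. For same-tenant collusion, \eqref{eq:retrieval} depends on the account $a$ only through $t(a)=t^*$. Under M4, every account retrieves against $\Index_{t^*}$ by assumption. So the response to $(a_i,q)$ is $\mathsf{TopK}(\mathsf{Score}(q,\Index_{t^*})+\xi)$ with $\xi\sim\mathcal{N}(0,\sigma^2 I)$. By M1, $\xi$ is a fresh draw independent of all prior noise and of which account asked. By M3, $\mathsf{Score}$ is the same deterministic function for every account within $W$.

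Therefore the following holds conditional on any prefix of the transcript: the distribution of the next query pair is identical in the two worlds, since it is the same function of the same prefix and coins. The distribution of the response is also identical, since it is the same Markov kernel of $q$ with fresh independent noise. The chain rule then gives equality of the joint law of the full pooled transcript, so $\tau_\Adv \stackrel{d}{=} \tau_{\Adv^\dagger}$. The shared output channel \ref{cap:output} ensures $\tau_\Adv$ is exactly this pooled sequence, with no additional information.

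The main obstacle is being precise about what ``account identity'' could leak. I must argue that the mechanism applies no account-dependent state within the audit window. This rules out per-account noise seeds, which M1 excludes, and per-account calibration differences, which the class in \Cref{def:gaussian-class} excludes because all accounts share $\sigma$. Rate-limit refusals are account-dependent, but they are deterministic functions of public query counts that $\Adv^\dagger$ can simulate, so they add no information about $\Index_{t^*}$. I will also note that concurrency, meaning simultaneous queries, can be serialized without changing the law, because the noise is independent. This step uses only the non-interactive nature of each $\mathsf{Retr}$ call; the concurrent-composition subtleties of \parencite{VadhanWang2021-concurrent} do not arise.
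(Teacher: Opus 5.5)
Your proposal is correct and follows essentially the same simulation argument as the paper. In both, $\Adv^\dagger$ runs the $\colluders$ accounts as virtual accounts against a single oracle, and the transcripts agree because the per-query response law depends only on $q$ and $\Index_{t^*}$, not on the account label; this uses same-tenant access or M4, together with fresh noise (M1). Your round-by-round chain-rule induction on the response kernel of \eqref{eq:retrieval}, including the handling of rate-limit refusals and serialization, is a more careful version of the paper's one-line ``marginalize to the same distribution'' step. That step appeals to the scalar mechanism's per-query privacy rather than to the response distribution itself.
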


\begin{proof}[Sketch -- full proof in appendix]
Capability \ref{cap:output} gives $\Adv^\dagger$ access to the same response sequence as $\Adv$. Capability \ref{cap:posterior} ensures that both adversaries condition on the same posterior over $\Index_{t^*}$. Capability \ref{cap:adaptive} permits $\Adv^\dagger$ to schedule queries in any order consistent with the per-account rate limit (capability \ref{cap:rate-aware}). The simulator interleaves the $\colluders$ accounts' query schedules into a single $\colluders n$-query sequence; rate-limit constraints translate into a per-step admissibility constraint that does not change the joint distribution of transcripts. The reduction applies unconditionally to same-tenant collusion; for cross-tenant and external collusion, it requires assumption M4 (\Cref{rem:access-control}) that each colluding account can cause queries to be scored against $\Index_{t^*}$ via a shared backend --- without M4 the regimes have zero leakage about $\Index_{t^*}$ and the reduction is vacuous.
\end{proof}

\subsection{Main result}

\begin{theorem}[$(\eps, \delta, \colluders)$-collusion DP upper bound for Gaussian-noised score release]
\label{thm:upper}
Let $\mathsf{RAG}$ be in the Gaussian-noised score-release class (\Cref{def:gaussian-class}) with per-account guarantee $(\eps_{\text{acc}}, \delta_{\text{acc}})$ via calibration \eqref{eq:per-query-budget}. For any $\delta \geq \delta_{\text{acc}}$, $\mathsf{RAG}$ is $(\eps_\colluders, \delta + \colluders \delta_{\text{acc}})$-collusion-DP (\Cref{def:k-collusion-dp}) with
\begin{equation}
\label{eq:upper-bound}
\eps_\colluders \;=\; \sqrt{\colluders} \cdot \eps_{\text{acc}} \cdot \sqrt{\frac{\log(1/\delta)}{\log(1/\delta_{\text{acc}})}} + O\!\left(\frac{\colluders \cdot \eps_{\text{acc}}^2}{\log(1/\delta_{\text{acc}})}\right).
\end{equation}
For $\delta = \delta_{\text{acc}}$ and $\eps_{\text{acc}} \in (0,1]$ this simplifies to $\eps_\colluders = \Theta(\sqrt{\colluders} \cdot \eps_{\text{acc}})$; explicitly, $\eps_\colluders \leq \sqrt{\colluders}\,\eps_{\text{acc}} + \colluders\,\eps_{\text{acc}}^2 / \log(1/\delta_{\text{acc}})$, so $\eps_\colluders \leq C \sqrt{\colluders}\,\eps_{\text{acc}}$ with $C = 1 + 1/\log(1/\delta_{\text{acc}})$ for any $\colluders$ satisfying $\sqrt{\colluders}\,\eps_{\text{acc}} \leq 1$ --- the regime in which the audit bound is operationally meaningful.
\end{theorem}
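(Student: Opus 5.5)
The plan is to reduce the coalition to a single analyst and then apply advanced composition once, at the per-query level, across all $\colluders n$ queries. Composing the $\colluders$ per-account guarantees as black boxes would lose the $\sqrt{\colluders}$ rate. This is exactly why \Cref{def:gaussian-class} demands per-query accounting.

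\textbf{Step 1 (reduction).} Fix a $\colluders$-collusion adversary $\Adv$ against $\Index_{t^*}$; this holds unconditionally for same-tenant collusion and under M4 otherwise. By \Cref{lem:reduction}, its pooled transcript is equal in distribution to that of a single adaptive analyst $\Adv^\dagger$ issuing $m=\colluders n$ queries against $\Index_{t^*}$. It therefore suffices to bound the privacy loss of $\tau_{\Adv^\dagger}$ over neighbouring pairs $\Index\sim_{t^*}\Index'$.

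By class membership and M1, each of these $m$ queries is an independent Gaussian score release. Its noise is calibrated to $(\eps_q,\delta_q)$-DP with $\eps_q,\delta_q$ as in \eqref{eq:per-query-budget}. The top-$K$ selection and the adversary's adaptive choice of the next query are post-processing or adaptive selection, which advanced composition already permits.

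\textbf{Step 2 (composition).} Apply \Cref{lem:advcomp} with $\eps_0=\eps_q$, $\delta_0=\delta_q$, $m=\colluders n$ and slack $\delta'=\delta$. This requires $\eps_q\le 1$, which holds since $\eps_{\text{acc}}\le 1$ and $n\ge1$.

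The failure term is $m\delta_q=\colluders n\cdot\delta_{\text{acc}}/n=\colluders\delta_{\text{acc}}$, giving total $\delta+\colluders\delta_{\text{acc}}$ as claimed.

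The leading term is
\[
\sqrt{2\colluders n\log(1/\delta)}\cdot\frac{\eps_{\text{acc}}}{\sqrt{2n\log(1/\delta_{\text{acc}})}}=\sqrt{\colluders}\,\eps_{\text{acc}}\sqrt{\frac{\log(1/\delta)}{\log(1/\delta_{\text{acc}})}}.
\]
The second term is $m\eps_q(e^{\eps_q}-1)\le m\eps_q^2(1+O(\eps_q))$. This simplifies to
\[
\frac{\colluders\,\eps_{\text{acc}}^2}{2\log(1/\delta_{\text{acc}})}\,(1+O(\eps_q)),
\]
which is $O\bigl(\colluders\eps_{\text{acc}}^2/\log(1/\delta_{\text{acc}})\bigr)$ and establishes \eqref{eq:upper-bound}.

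\textbf{Step 3 (specialisation).} Set $\delta=\delta_{\text{acc}}$, so the square-root ratio equals $1$. Using $e^{x}-1\le 2x$ for $x\le1$, the second term is at most $\colluders\eps_{\text{acc}}^2/\log(1/\delta_{\text{acc}})$. This gives the explicit bound $\eps_\colluders\le\sqrt{\colluders}\,\eps_{\text{acc}}+\colluders\eps_{\text{acc}}^2/\log(1/\delta_{\text{acc}})$.

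Now factor out $\sqrt{\colluders}\,\eps_{\text{acc}}$. When $\sqrt{\colluders}\,\eps_{\text{acc}}\le1$, the remaining factor is $1+\sqrt{\colluders}\,\eps_{\text{acc}}/\log(1/\delta_{\text{acc}})\le C$, so $\eps_\colluders\le C\sqrt{\colluders}\,\eps_{\text{acc}}$. The matching $\Omega$ direction of the $\Theta$ is deferred to \Cref{thm:lower}; here only the upper bound is asserted.

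\textbf{Main obstacle.} The delicate point is Step 1's justification that adaptivity across accounts and rate-limit scheduling do not break the per-query Gaussian accounting. In particular, one must check that the neighbouring relation (add/remove/replace, L2 sensitivity $\le1$) is the same for every account's query against the shared $\Index_{t^*}$. If it were not, the per-query $\eps_q$ would not be uniform and the $m$-fold composition would not apply verbatim.

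I would handle this by checking that sensitivity is a property of $\mathsf{Score}(q,\cdot)$ alone, independent of the issuing account. Under M3 the embedder is fixed within the window, so every query sees the same $\sigma$ and $\Delta$. A secondary check is the $\delta$ bookkeeping: the per-account $\delta'$ slacks must not be double-counted, since only the single global $\delta$ slack is used.
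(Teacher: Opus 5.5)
Your proof is correct and follows the paper's argument step for step: reduce via \Cref{lem:reduction} to $\colluders n$ adaptive $(\eps_q,\delta_q)$ queries, apply \Cref{lem:advcomp} with $m=\colluders n$ and $\delta'=\delta$, substitute \eqref{eq:per-query-budget}, use $e^{\eps_q}-1\le 2\eps_q$ to get the residual $\colluders\eps_{\text{acc}}^2/\log(1/\delta_{\text{acc}})$, and collect $\colluders n\delta_q=\colluders\delta_{\text{acc}}$; your Step~3 factoring that produces $C$ only makes explicit what the paper leaves implicit. One motivational aside is overstated: composing the $\colluders$ per-account guarantees as black boxes via \Cref{lem:advcomp} (for $\eps_{\text{acc}}\le 1$, with a concurrent-composition theorem covering interleaved accounts) still gives $O(\sqrt{\colluders\log(1/\delta)}\,\eps_{\text{acc}})$, so per-query accounting buys a $\sqrt{2\log(1/\delta_{\text{acc}})}$ improvement in the leading constant, not the $\sqrt{\colluders}$ rate itself.
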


\begin{proof}[Sketch -- full proof in appendix]
By \Cref{lem:reduction}, it suffices to bound the privacy of $\colluders n$ adaptive $(\eps_q, \delta_q)$-DP queries against $\Index_{t^*}$. Apply \Cref{lem:advcomp} with $m = \colluders n$, $\eps_0 = \eps_q$, $\delta_0 = \delta_q$, $\delta' = \delta$. Substituting \eqref{eq:per-query-budget}:
\begin{align*}
\eps_\colluders &\leq \sqrt{2 \colluders n \log(1/\delta)} \cdot \eps_q + \colluders n \cdot \eps_q (e^{\eps_q}-1) \\
&= \sqrt{2 \colluders n \log(1/\delta)} \cdot \frac{\eps_{\text{acc}}}{\sqrt{2 n \log(1/\delta_{\text{acc}})}} + O(\colluders n \eps_q^2) \\
&= \sqrt{\colluders} \cdot \eps_{\text{acc}} \cdot \sqrt{\frac{\log(1/\delta)}{\log(1/\delta_{\text{acc}})}} + O\!\left(\frac{\colluders \cdot \eps_{\text{acc}}^2}{\log(1/\delta_{\text{acc}})}\right).
\end{align*}
The total $\delta$ slack absorbs $\colluders n \cdot \delta_q = \colluders \delta_{\text{acc}}$ from the per-query failure probabilities plus $\delta$ from the composition slack.
\end{proof}

\subsection{Tightness}
\label{subsec:lower-bound}

We exhibit $\mathcal{M}^*$ in the Gaussian-noised score-release class saturating the per-account guarantee together with a $\colluders$-collusion adversary achieving MIA advantage $\Omega(\sqrt{\colluders} \cdot \eps_{\text{acc}})$. By the advantage-to-DP reduction \parencite{DworkRoth2014}, any $(\eps, \delta)$ characterizing $\mathcal{M}^*$ has $\eps = \Omega(\sqrt{\colluders} \cdot \eps_{\text{acc}})$ up to logs in $\delta_{\text{acc}}, \delta_q$; the rate matches \Cref{thm:upper}, constants differ (\Cref{cor:tightness,rem:topk-transfer}).

\paragraph{Construction} Fix tenant $t^*$ with index $\Index_{t^*} \subseteq \mathbb{R}^d$ with unit-norm embeddings. Let the neighboring indices $\Index \sim_{t^*} \Index'$ differ in a single document: $\Index = \Index_0 \cup \{x_*\}$, $\Index' = \Index_0 \cup \{x_*'\}$, with $\|x_*\|_2 = \|x_*'\|_2 = 1$. Define a probe query $q^*$ (unit-norm) such that the similarity scores satisfy $\langle q^*, x_*\rangle - \langle q^*, x_*'\rangle = \Delta$ for some $\Delta \in (0, 1]$ (achievable with $\Delta = 1$ by taking $x_* = q^*$ and $x_*'$ orthogonal to $q^*$, or a smaller $\Delta$ by interpolation toward $x_* = x_*'$).

The mechanism $\mathcal{M}^*$ releases, per query $q$, the Gaussian-noised scalar
\begin{equation}
\label{eq:lb-mechanism}
\mathcal{M}^*(q;\, \Index_{t^*}) \;=\; \langle q, x_q \rangle + Z, \qquad Z \sim \mathcal{N}(0, \sigma^2),
\end{equation}
where $x_q$ is the (deterministic) document at the position of query $q$ in $\Index_{t^*}$, and the noise scale is
\begin{equation}
\label{eq:lb-sigma}
\sigma \;=\; \frac{\Delta \sqrt{2 n \log(1/\delta_{\text{acc}}) \cdot 2 \log(1.25/\delta_q)}}{\eps_{\text{acc}}},
\end{equation}
with $\delta_q = \delta_{\text{acc}}/n$. $\mathcal{M}^*$ releases the noisy similarity score directly --- a strictly stronger adversary surface than top-$K$-only (whose output is post-processing of the same noisy score vector), so lower bounds against $\mathcal{M}^*$ characterize the score-release class itself, not top-$K$.

\paragraph{Per-account privacy of $\mathcal{M}^*$} According to the standard Gaussian mechanism \cite[Theorem A.1]{DworkRoth2014}, a single query of $\mathcal{M}^*$ is $(\eps_q, \delta_q)$-DP with $\eps_q = \Delta \sqrt{2 \log(1.25/\delta_q)}/\sigma$. Substituting \eqref{eq:lb-sigma},
\(\eps_q = \eps_{\text{acc}}/\sqrt{2 n \log(1/\delta_{\text{acc}})},\)
which matches the per-query calibration \eqref{eq:per-query-budget}. Thus, $n$ adaptive queries from a single account compose $(\eps_{\text{acc}}, \delta_{\text{acc}})$-DP via \Cref{lem:advcomp}, satisfying the advertised per-account guarantee.

\paragraph{Adversary $\Adv^*$} Each of the $\colluders$ accounts issues the same probe query $q^*$ for all $n$ rounds, yielding $\colluders n$ samples
\begin{equation*}
y_{i,j} \;=\; \langle q^*, x_*^{(b)} \rangle + Z_{i,j}, \qquad i \in [\colluders],\, j \in [n],
\end{equation*}
where $x_*^{(b)} = x_*$ for $\Index$ ($b=0$) and $x_*^{(b)} = x_*'$ for $\Index'$ ($b=1$), with $Z_{i,j} \stackrel{\text{iid}}{\sim} \mathcal{N}(0, \sigma^2)$. The adversary computes the empirical mean $\bar y = \frac{1}{\colluders n}\sum_{i,j} y_{i,j}$ and outputs $\hat b = \mathbf{1}[\bar y < \tfrac{1}{2}(\langle q^*, x_*\rangle + \langle q^*, x_*'\rangle)]$. The adversary exercises capabilities \ref{cap:posterior}--\ref{cap:output} of \Cref{sec:threat-model}: both branches share the Gaussian likelihood posterior, the $\colluders$ accounts coordinate queries within the per-account rate limit, and they pool all responses through a single estimator.

\paragraph{Distinguishing advantage} Under either index, $\bar y$ is Gaussian with variance $\sigma^2/(\colluders n)$ and means separated by $\Delta$. The advantage of the optimal test is
\begin{equation*}
\mathrm{Adv}(\Adv^*) \;=\; 2 \, \Phi\!\left(\frac{\Delta \sqrt{\colluders n}}{2 \sigma}\right) - 1,
\end{equation*}
where $\Phi$ is the standard normal CDF. Substituting \eqref{eq:lb-sigma} and simplifying,
\begin{equation*}
\frac{\Delta \sqrt{\colluders n}}{2 \sigma} \;=\; \frac{\sqrt{\colluders} \cdot \eps_{\text{acc}}}{4 \sqrt{\log(1/\delta_{\text{acc}}) \log(1.25/\delta_q)}}.
\end{equation*}
For $\eps_{\text{acc}} \in (0, 1]$ and $\delta_{\text{acc}}, \delta_q$ in any constant-polynomial regime, $\Phi$ is approximately linear at the origin ($\Phi(z) - 1/2 \geq z/\sqrt{2\pi}$ for $z \in [0, 1]$), so
\begin{equation}
\mathrm{Adv}(\Adv^*) \;=\; \Omega\!\left(\frac{\sqrt{\colluders} \cdot \eps_{\text{acc}}}{\sqrt{\log(1/\delta_{\text{acc}})}}\right).
\end{equation}

\begin{theorem}[Lower bound]
\label{thm:lower}
For any $\colluders \geq 1$ and per-account budget $(\eps_{\text{acc}}, \delta_{\text{acc}})$ with $\eps_{\text{acc}} \in (0, 1]$, the mechanism $\mathcal{M}^*$ in \eqref{eq:lb-mechanism}--\eqref{eq:lb-sigma} satisfies the per-account $(\eps_{\text{acc}}, \delta_{\text{acc}})$-DP guarantee, yet admits a $\colluders$-collusion adversary $\Adv^*$ with indistinguishability advantage $\mathrm{Adv}(\Adv^*) = \Omega(\sqrt{\colluders} \cdot \eps_{\text{acc}} / \sqrt{\log(1/\delta_{\text{acc}}) \log(1/\delta_q)})$ between neighboring indices, with $\delta_q = \delta_{\text{acc}}/n$.
\end{theorem}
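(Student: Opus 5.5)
The proof has two parts: check that $\mathcal{M}^*$ meets the per-account guarantee, then lower-bound the advantage of the pooled-mean adversary $\Adv^*$. Both follow the construction paragraphs preceding the statement.

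\emph{Step 1: per-account privacy.} Fix one query $q$. The neighboring indices change the released mean $\langle q, x_q\rangle$ by at most $\Delta \le 1$, since embeddings are unit-norm and the construction fixes the score gap at $\Delta$. So a single query of $\mathcal{M}^*$ is a scalar Gaussian mechanism with L2 sensitivity $\Delta$ and noise scale $\sigma$ from \eqref{eq:lb-sigma}. By the standard Gaussian mechanism (Dwork--Roth, Thm.\ A.1), it is $(\eps_q,\delta_q)$-DP with $\eps_q = \Delta\sqrt{2\log(1.25/\delta_q)}/\sigma$. That theorem needs $\eps_q<1$; this holds because $\eps_{\text{acc}}\le 1$ and the denominator $\sqrt{2n\log(1/\delta_{\text{acc}})}$ exceeds $1$ in the regime considered. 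Substituting $\sigma$ gives exactly the calibration \eqref{eq:per-query-budget}. Applying \Cref{lem:advcomp} to one account's $n$ adaptive queries, with $\delta'=\delta_{\text{acc}}$, then gives $(\eps_{\text{acc}}, \delta_{\text{acc}})$-DP, up to the constant-factor $\delta$ bookkeeping of the footnote to \eqref{eq:per-query-budget}. The second-order term $n\eps_q(e^{\eps_q}-1)=O(\eps_{\text{acc}}^2/\log(1/\delta_{\text{acc}}))$ is absorbed by the same slack, or by a constant rescaling of $\sigma$. A constant rescaling does not affect the $\Omega(\cdot)$ conclusion.

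\emph{Step 2: advantage of $\Adv^*$.} Under M1, the $\colluders n$ noise draws are i.i.d.\ $\mathcal{N}(0,\sigma^2)$ across accounts and rounds. Hence $\bar y \sim \mathcal{N}(\mu_b, \sigma^2/(\colluders n))$ with $|\mu_0-\mu_1|=\Delta$. The midpoint threshold test errs with probability $\Phi(-\Delta\sqrt{\colluders n}/(2\sigma))$ under each hypothesis, so its advantage (TPR minus FPR) is $2\Phi(z)-1$ with $z=\Delta\sqrt{\colluders n}/(2\sigma)$. By Neyman--Pearson this test is optimal, but the lower bound only needs achievability.

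Substituting $\sigma$, the factors $\Delta$ and $\sqrt n$ cancel and we get
\[
z=\frac{\sqrt{\colluders}\,\eps_{\text{acc}}}{4\sqrt{\log(1/\delta_{\text{acc}})\log(1.25/\delta_q)}}.
\]
The main step is turning $2\Phi(z)-1$ into a linear lower bound uniformly in $\colluders$. Since $\Phi$ is concave on $[0,\infty)$, $\Phi(z)-\tfrac12 \ge z\,\varphi(1) \ge z/(\sqrt{2\pi}\,e^{1/2})$ for $z\in[0,1]$. This gives $\mathrm{Adv}\ge c\, z$ with an absolute constant $c$. For $z>1$ we instead use $\mathrm{Adv}\ge 2\Phi(1)-1$, a constant, so the $\Omega(\cdot)$ claim should be read as $\Omega(\min\{1, z\})$. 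That is the only nontrivial reading, since advantage is capped at $1$, and I would state it explicitly.

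Finally, $\log(1.25/\delta_q)=\Theta(\log(1/\delta_q))$ for $\delta_q\le 1/2$. This yields $\mathrm{Adv}(\Adv^*)=\Omega\bigl(\sqrt{\colluders}\,\eps_{\text{acc}}/\sqrt{\log(1/\delta_{\text{acc}})\log(1/\delta_q)}\bigr)$, as stated.

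The one point to check carefully is that $\Adv^*$ is a legitimate coalition. Each account issues exactly $n$ queries, so each stays within its own per-account budget. The advantage comes entirely from pooling (C4), which is permitted by \Cref{def:k-collusion-dp}.
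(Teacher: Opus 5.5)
Your proposal is correct and follows the paper's argument step for step: the Gaussian-mechanism calibration plus advanced composition gives the per-account guarantee, and the pooled-mean midpoint test gives advantage $2\Phi(z)-1$ with $z=\sqrt{\colluders}\,\eps_{\text{acc}}/(4\sqrt{\log(1/\delta_{\text{acc}})\log(1.25/\delta_q)})$, linearized at the origin. Your linearization $\Phi(z)-\tfrac12\ge z\,\varphi(1)$ on $[0,1]$, read as $\Omega(\min\{1,z\})$, is actually more careful than the paper's, whose stated $\Phi(z)-\tfrac12\ge z/\sqrt{2\pi}$ holds only in the reverse direction; the one loose end, which you share with the paper, is that over arbitrary adaptive queries the per-query sensitivity is $\|x_*-x_*'\|_2$ (this equals $\sqrt{2}$, not $\Delta=1$, in the orthogonal example), so choose the pair with $x_*-x_*'$ parallel to $q^*$ to make it exactly $\Delta$.
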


\begin{corollary}[Rate match within the score-release class]
\label{cor:tightness}
Combining \Cref{thm:upper,thm:lower}, the joint privacy parameter satisfies $\eps_\colluders = O(\sqrt{\colluders} \cdot \eps_{\text{acc}})$ and an explicit attack achieves MIA advantage $\Omega(\sqrt{\colluders} \cdot \eps_{\text{acc}})$ on the same construction. Rates match; constants do not (advantage-to-DP reduction vs.\ privacy-loss-distribution analysis), so we claim rate-tightness, not exact-DP-parameter tightness.
\end{corollary}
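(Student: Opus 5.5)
The corollary follows by combining the two bounds already in hand and then turning the advantage lower bound into a lower bound on any admissible privacy parameter.

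\textbf{Upper bound.} Invoke \Cref{thm:upper} with $\delta = \delta_{\text{acc}}$. Restricting to $\eps_{\text{acc}} \in (0,1]$ and to the operational regime $\sqrt{\colluders}\,\eps_{\text{acc}} \leq 1$, the explicit form $\eps_\colluders \leq \sqrt{\colluders}\,\eps_{\text{acc}} + \colluders\eps_{\text{acc}}^2/\log(1/\delta_{\text{acc}})$ gives
\[
\eps_\colluders \leq \bigl(1 + 1/\log(1/\delta_{\text{acc}})\bigr)\sqrt{\colluders}\,\eps_{\text{acc}} .
\]
The reason is that the second term equals $\sqrt{\colluders}\,\eps_{\text{acc}}\cdot(\sqrt{\colluders}\,\eps_{\text{acc}})/\log(1/\delta_{\text{acc}})$. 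This establishes $\eps_\colluders = O(\sqrt{\colluders}\,\eps_{\text{acc}})$. The mechanism $\mathcal{M}^*$ is covered, since it lies in the class of \Cref{def:gaussian-class}: each query is a Gaussian mechanism with sensitivity $\Delta \leq 1$, calibrated by \eqref{eq:lb-sigma} to exactly \eqref{eq:per-query-budget}, and M1 supplies the required independence. So the upper bound applies to the very construction used in the attack.

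\textbf{Lower bound.} \Cref{thm:lower} supplies $\Adv^*$ with advantage $\mathrm{Adv} = \Omega(\sqrt{\colluders}\,\eps_{\text{acc}}/\mathrm{polylog})$ against $\mathcal{M}^*$. Membership-inference advantage here is the distinguishing advantage between $\Index$ and $\Index'$, so it coincides with MIA advantage for the candidate document $x_*$. Next I apply the standard advantage-to-DP inequality. If the coalition transcript is $(\eps,\delta)$-DP, then every test satisfies $\mathrm{TPR} \leq e^{\eps}\mathrm{FPR} + \delta$, and the same holds symmetrically, so the advantage obeys
\[
\mathrm{Adv} \leq \frac{e^{\eps} - 1 + 2\delta}{e^{\eps} + 1} \leq \frac{\eps}{2} + \delta .
\]
With $\delta = \delta_{\text{acc}} + \colluders\delta_{\text{acc}}$ negligible relative to $\sqrt{\colluders}\,\eps_{\text{acc}}$ in the stated polynomial regime, this forces $\eps = \Omega(\sqrt{\colluders}\,\eps_{\text{acc}})$ up to the log factors in $\delta_{\text{acc}}$ and $\delta_q$.

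\textbf{Main obstacle.} The delicate step is checking that the attack stays in the near-linear regime of $\Phi$, that is, $z = \sqrt{\colluders}\,\eps_{\text{acc}}/(4\sqrt{\log(1/\delta_{\text{acc}})\log(1.25/\delta_q)}) \leq 1$. This holds under the same operational constraint $\sqrt{\colluders}\,\eps_{\text{acc}} \leq 1$ once $\delta_{\text{acc}}$ is small. I would state explicitly that both bounds are taken on this common regime.

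I would also flag that the two sides carry different log factors and constants. The upper bound has a factor of $1$, while the lower bound loses $\sqrt{\log(1/\delta_{\text{acc}})\log(1/\delta_q)}$, and the advantage-to-DP step loses a further factor of $2$. The corollary therefore asserts only that the rates match in $\colluders$ and $\eps_{\text{acc}}$, not that the DP parameters match exactly.
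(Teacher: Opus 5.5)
Your proposal is correct and follows the paper's own route. The upper half is \Cref{thm:upper} at $\delta=\delta_{\text{acc}}$ in the regime $\sqrt{\colluders}\,\eps_{\text{acc}}\le 1$; the lower half is \Cref{thm:lower} pushed through the advantage-to-DP reduction, exactly as in the opening paragraph of \Cref{subsec:lower-bound}, with only the rate in $\colluders$ and $\eps_{\text{acc}}$ claimed to match. Your explicit checks that $\mathcal{M}^*$ lies in \Cref{def:gaussian-class}, and that the slack $\delta=(\colluders+1)\delta_{\text{acc}}$ is small relative to the log-deflated advantage (not merely relative to $\sqrt{\colluders}\,\eps_{\text{acc}}$), make precise what the paper leaves implicit.
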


\begin{remark}[Top-$K$ transfer: upper bound, not tightness]
\label{rem:topk-transfer}
A top-$K$-only output is a deterministic function (post-processing) of $\mathcal{M}^*$'s noisy score vector, so the top-$K$ adversary's advantage is bounded above by $\mathcal{M}^*$'s --- the $\Theta(\sqrt{\colluders}\eps_{\text{acc}})$ rate is an upper bound on top-$K$ leakage, not a proven achievable rate. The empirical transfer is tested in \Cref{subsec:rate-transfer,subsec:hnsw-production}, where constants shrink but slope is preserved at the scales tested.
\end{remark}

\paragraph{Collusion regimes} The construction realizes same-tenant collusion directly; cross-tenant and external collusion reduce identically only under M4 (see \Cref{rem:access-control}), as flagged in \Cref{subsec:adversary-model}.

A tighter lower-bound constant can be achieved via the analytic Gaussian mechanism of Balle and Wang~\parencite{Balle2018}, which removes the $\sqrt{\log(1.25/\delta_q)}$ factor and matches the upper bound's leading constant. We leave this constant improvement to a follow-up work, as the rate $\Theta(\sqrt{\colluders})$ remains unchanged.

\subsection{Operational reading}

For services in \Cref{def:gaussian-class}, the advertised per-account guarantee becomes a coalition guarantee: $\eps_{\text{joint}} \approx \sqrt{\colluders}\,\eps_{\text{acc}}$ and $\delta_{\text{joint}} \leq \delta+\colluders\delta_{\text{acc}}$. Thus a $1$-DP-per-account service behaves like $3.2$-DP at $\colluders=10$ and $7.1$-DP at $\colluders=50$; to advertise a target joint budget $\eps^*_{\text{joint}}$, the provider must set $\eps_{\text{acc}}=\eps^*_{\text{joint}}/\sqrt{\colluders_{\max}}$. The empirical signature is correspondingly simple: MIA AUC should rise approximately linearly in $\sqrt{\colluders}$ at fixed $\eps_{\text{acc}}$, which is the shape tested next.

\subsection{Alternative analysis via R\'enyi DP}
\label{subsec:rdp}

The advanced-composition proof of \Cref{thm:upper} loses constants relative to a R\'enyi DP analysis~\parencite{Mironov2017}; the RDP route gives the same leading rate $\sqrt{\colluders}\,\eps_{\text{acc}}$ with a constant-factor tighter residual, and admits exact numerical evaluation via the moments accountant~\parencite{AbadiCCS2016}.

\begin{theorem}[Sharper upper bound via RDP]
\label{thm:upper-rdp}
Under the hypotheses of \Cref{thm:upper}, the joint mechanism is $(\eps_\colluders^{\text{RDP}}, \delta)$-collusion-DP with
\begin{equation}
\label{eq:rdp-final}
\eps_\colluders^{\text{RDP}} = \sqrt{\colluders} \cdot \eps_{\text{acc}} \cdot \sqrt{\frac{\log(1/\delta)}{\log(1/\delta_{\text{acc}})}} + O\!\left(\frac{\colluders \cdot \eps_{\text{acc}}^2}{\log(1/\delta_{\text{acc}})}\right).
\end{equation}
\end{theorem}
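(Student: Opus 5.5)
By \Cref{lem:reduction} it suffices to analyse a single adaptive analyst issuing $m = \colluders n$ queries, each a Gaussian-noised score release with sensitivity $\Delta$ and noise scale $\sigma$, calibrated as in \Cref{def:gaussian-class}. The plan is to work in R\'enyi DP throughout and convert to $(\eps,\delta)$-DP only once, at the end.

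\textbf{Step 1: per-query RDP.} For a Gaussian mechanism with sensitivity $\Delta$ and scale $\sigma$, each query is $(\alpha, \alpha\rho)$-RDP for every $\alpha > 1$, where $\rho = \Delta^2/(2\sigma^2)$ \parencite{Mironov2017}. RDP composes additively under adaptivity, and M1 guarantees fresh noise per query. So the $\colluders n$-query transcript is $(\alpha, \colluders n \alpha \rho)$-RDP.

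\textbf{Step 2: convert to $(\eps,\delta)$-DP.} The standard conversion gives
\[
\eps = \colluders n\alpha\rho + \frac{\log(1/\delta)}{\alpha-1}.
\]
Optimising over $\alpha$ yields
\[
\eps = \colluders n\rho + 2\sqrt{\colluders n\rho\log(1/\delta)}.
\]
There is no per-query $\delta_q$ to sum, which is why the failure term is $\delta$ alone rather than $\delta + \colluders\delta_{\text{acc}}$.

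\textbf{Step 3: express $\rho$ in terms of $\eps_{\text{acc}}$.} The calibration \eqref{eq:per-query-budget} fixes $\sigma$ through the classical Gaussian bound
\[
\eps_q = \frac{\Delta\sqrt{2\log(1.25/\delta_q)}}{\sigma}.
\]
Hence
\[
n\rho = \frac{n\eps_q^2}{4\log(1.25/\delta_q)} = \frac{\eps_{\text{acc}}^2}{8\log(1/\delta_{\text{acc}})\log(1.25/\delta_q)}.
\]
Substituting gives a leading term of order $\sqrt{\colluders}\,\eps_{\text{acc}}\sqrt{\log(1/\delta)}/\sqrt{\log(1/\delta_{\text{acc}})\log(1.25/\delta_q)}$ and a residual of order $\colluders\eps_{\text{acc}}^2/\log(1/\delta_{\text{acc}})$. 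The leading term is at most the advertised leading term, because $\log(1.25/\delta_q) \ge 1$ whenever $\delta_q \le 1.25/e$. The residual $\colluders n\rho$ is $O(\colluders\eps_{\text{acc}}^2/\log(1/\delta_{\text{acc}}))$, which matches \eqref{eq:rdp-final}.

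\textbf{Main obstacle: constant bookkeeping in Step 3.} The cleanest route is to state the result as an upper bound, i.e.\ $\eps^{\text{RDP}}_\colluders$ is at most the right-hand side of \eqref{eq:rdp-final}. The extra $\log(1.25/\delta_q)$ factor then only helps, and the big-$O$ absorbs the residual. If one instead insists on the leading constant exactly as written, I would first try to reparametrise $\sigma$ directly through $\rho$: define $\rho$ by $\colluders n\rho$ evaluated at $\colluders = 1$, chosen so that the single-account RDP-to-DP conversion reproduces $\eps_{\text{acc}}$ at $\delta_{\text{acc}}$. That gives
\[
\sqrt{n\rho} \approx \frac{\eps_{\text{acc}}}{2\sqrt{\log(1/\delta_{\text{acc}})}},
\]
and the $\colluders$-fold bound then scales exactly by $\sqrt{\colluders}\sqrt{\log(1/\delta)/\log(1/\delta_{\text{acc}})}$. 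The lower-order corrections from the quadratic $\colluders n\rho$ term go into the $O(\cdot)$.
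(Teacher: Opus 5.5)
Your proof is correct. Steps 1--2 coincide with the paper's: per-query Gaussian RDP $\alpha\Delta^2/(2\sigma^2)$, additive composition over the $\colluders n$ pooled queries, and optimal-order conversion to $\colluders n\rho + 2\sqrt{\colluders n\rho\log(1/\delta)}$. That last expression is the paper's $\Delta\sqrt{2\colluders n\log(1/\delta)}/\sigma + \colluders n\Delta^2/(2\sigma^2)$.

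The difference is the calibration you substitute in Step 3.
\begin{itemize}
\item \emph{The paper's route.} It plugs in $\sigma = \Delta\sqrt{2n\log(1/\delta_{\text{acc}})}/\eps_{\text{acc}} = \Delta/\eps_q$. This is exactly your fallback $\sqrt{n\rho} = \eps_{\text{acc}}/(2\sqrt{\log(1/\delta_{\text{acc}})})$. It is how the paper obtains the leading coefficient of \eqref{eq:upper-bound} exactly, and the residual $\colluders\eps_{\text{acc}}^2/(4\log(1/\delta_{\text{acc}}))$ behind its ``factor of $4$'' remark.
\item \emph{Your main route.} You keep the $\sqrt{2\log(1.25/\delta_q)}$ factor of the classical Gaussian calibration, as in \eqref{eq:lb-sigma}. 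Your leading term is therefore smaller by $\sqrt{2\log(1.25/\delta_q)}$, and your residual is $\colluders\eps_{\text{acc}}^2/(8\log(1/\delta_{\text{acc}})\log(1.25/\delta_q))$. You then conclude by monotonicity of $(\eps,\delta)$-DP in $\eps$.
\end{itemize}

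What your route buys is fidelity to the stated hypotheses of \Cref{def:gaussian-class}. The choice $\sigma = \Delta/\eps_q$ does not by itself make a single query $(\eps_q,\delta_q)$-DP at $\delta_q = \delta_{\text{acc}}/n$: the tight $\delta$ at that noise level is a constant fraction of $\eps_q$. So the paper's ``exact'' leading constant really reflects an RDP-level recalibration rather than the per-query $(\eps_q,\delta_q)$ calibration it cites.

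What the paper's substitution buys is independence from the calibration formula, though the paper does not spell this out. In the regime $\delta_q \ll \eps_q$, any $\sigma$ giving per-query $(\eps_q,\delta_q)$-DP exceeds $\Delta/\eps_q$. The RDP bound is decreasing in $\sigma$, so the paper's number also covers tighter-than-classical calibrations such as analytic-Gaussian. Your classical-formula substitution reaches that conclusion only once you add this observation.
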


The proof (\Cref{app:proof-upper-rdp}) tightens the residual by a constant factor of $4$ relative to \Cref{thm:upper}. The audit pipeline (\Cref{sec:audit-protocol}) updates $\rho(\alpha)$ on a grid $\alpha \in \{2, \ldots, 64\}$ per query and reports the moments-accountant value at audit-window close, giving tighter realized $(\eps, \delta)$ pairs than the closed-form bound at moderate $\colluders$.

\subsection{Bridge to membership-inference advantage}
\label{subsec:mia-bridge}

The bounds of \Cref{thm:upper,thm:lower,thm:upper-rdp} characterize leakage in DP-theoretic terms. The empirical attacks in \Cref{sec:attacks} measure the membership-inference advantage of $\colluders$-collusion against a held-out target document. This subsection states the standard reduction from $(\eps, \delta)$-DP to MIA advantage \parencite{Yeom2018,KairouzOhViswanath2017,DworkRoth2014} and applies it to obtain a falsifiable prediction for the AUC--vs--$\colluders$ curve we test empirically.

\begin{definition}[Membership-inference advantage]
\label{def:mia-adv}
For a mechanism $\mathcal{M}$, target document $x^*$, and $\colluders$-collusion adversary $\Adv$ outputting a binary decision, the (binary-decision) MIA advantage is
\(\mathrm{Adv}_{\text{MIA}}(\Adv;\, x^*) := \bigl| \Pr[\Adv(\tau) = 1 \mid x^* \in \Index] - \Pr[\Adv(\tau) = 1 \mid x^* \notin \Index] \bigr|,\)
where $\tau$ is the pooled transcript across the $\colluders$ accounts. The supremum of $\mathrm{Adv}_{\text{MIA}}$ over all binary $\Adv$ equals the total-variation distance between the two transcript distributions; this is the quantity bounded by Yeom~\parencite{Yeom2018}'s DP-to-MIA reduction.

When $\Adv$ instead returns a real-valued score, we will use the \emph{Mann-Whitney AUC} $\mathrm{AUC}_{\text{MW}}(\Adv) := \Pr[\Adv(\tau_{\text{in}}) > \Adv(\tau_{\text{out}})] + \tfrac{1}{2}\Pr[\Adv(\tau_{\text{in}}) = \Adv(\tau_{\text{out}})]$ where $\tau_{\text{in}}, \tau_{\text{out}}$ are transcripts under $x^* \in \Index$ and $x^* \notin \Index$ respectively. This is the empirical quantity reported in \Cref{sec:attacks}. The two quantities coincide for binary $\Adv$ via $\mathrm{AUC}_{\text{MW}} = \tfrac{1}{2}(1 + \mathrm{Adv}_{\text{MIA}})$, but in general $\mathrm{AUC}_{\text{MW}} - \tfrac{1}{2} \geq \tfrac{1}{2}\mathrm{Adv}_{\text{MIA}}$, with equality iff the optimal Bayes test is binary (the continuous-score test extracts strictly more information from a Gaussian-noised channel). Equation~\eqref{eq:auc-prediction-explicit} states the AUC prediction directly for the Gaussian construction of \Cref{thm:lower}.
\end{definition}

\begin{lemma}[DP-to-MIA reduction \parencite{Yeom2018}]
\label{lem:dp-to-mia}
If $\mathcal{M}$ is $(\eps, \delta)$-DP, then for any $\colluders$-collusion adversary $\Adv$ and any target $x^*$,
\begin{equation}
\label{eq:mia-bound}
\mathrm{Adv}_{\text{MIA}}(\Adv;\, x^*) \;\leq\; \frac{e^{\eps} - 1 + 2\delta}{e^{\eps} + 1} \;\leq\; \tanh(\eps/2) + \delta.
\end{equation}
For $\eps \leq 1$, the right-hand side is further bounded by $\eps/2 + \delta$.
\end{lemma}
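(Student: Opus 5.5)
The plan is to reduce the statement to two instances of the $(\eps,\delta)$ inequality, applied to a single post-processing event and its complement, and then combine them linearly. Fix the target $x^*$ and a $\colluders$-collusion adversary $\Adv$. Write $\tau_{\text{in}}$ and $\tau_{\text{out}}$ for the pooled transcripts under $\Index$ with $x^* \in \Index$ and $x^* \notin \Index$. These two indices differ by the addition or removal of one document of tenant $t^*$, so they are neighbors in the sense of \Cref{def:k-collusion-dp}. By hypothesis, the map from the index to the pooled transcript is $(\eps,\delta)$-DP for this relation.

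Since $\Adv$'s final decision is a (possibly randomized) function of $\tau$, I first condition on $\Adv$'s internal coins. These coins are independent of the mechanism's noise, so for each fixed coin value the decision is a deterministic post-processing of $\tau$. The event $S = \{\tau : \Adv(\tau) = 1\}$ is then measurable, and so is its complement $S^c$. Set $p = \Pr[\Adv(\tau_{\text{in}}) = 1]$ and $q = \Pr[\Adv(\tau_{\text{out}}) = 1]$. By symmetry of the neighbor relation (swap the roles of the two indices if needed), I may assume $p \geq q$, so that $\mathrm{Adv}_{\text{MIA}} = p - q$.

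Applying the DP inequality to $S$, from $\Index$ to $\Index'$, gives
\begin{equation*}
p \leq e^{\eps} q + \delta .
\end{equation*}
Applying it to $S^c$ in the reverse direction gives
\begin{equation*}
1 - q \leq e^{\eps}(1-p) + \delta, \qquad\text{equivalently}\qquad e^{\eps}p - q \leq e^{\eps} - 1 + \delta .
\end{equation*}
Adding these two linear inequalities cancels the asymmetric coefficients and yields
\begin{equation*}
(1+e^{\eps})(p-q) \leq e^{\eps} - 1 + 2\delta ,
\end{equation*}
which is exactly the first bound in \eqref{eq:mia-bound}. The bound is attained at the vertex where both constraints are tight, namely $q = (1-\delta)/(e^{\eps}+1)$, which confirms that this weighting of the two inequalities is the right one. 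Because the bound holds for every fixed coin value, averaging over $\Adv$'s coins preserves it.

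The remaining inequalities are elementary. First, $(e^{\eps}-1)/(e^{\eps}+1) = \tanh(\eps/2)$. Second, $2\delta/(e^{\eps}+1) \leq \delta$ because $e^{\eps} \geq 1$. Together these give the middle bound $\tanh(\eps/2) + \delta$. Finally, $\tanh x \leq x$ for $x \geq 0$, which gives $\eps/2 + \delta$. This last step in fact holds for every $\eps \geq 0$, so the restriction $\eps \leq 1$ in the statement is not needed.

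The only step requiring care is applying DP to the complement event in the reverse direction. This is legitimate because the neighbor relation is symmetric and \Cref{def:k-collusion-dp} quantifies over all output sets. The rest is bookkeeping.
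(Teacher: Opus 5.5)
Your proof is correct and complete. The paper does not prove this lemma itself: it attributes it to Yeom et al.\ and cites Kairouz--Oh--Viswanath and Dwork--Roth in the surrounding text. Yeom et al.'s theorem is a pure-DP statement giving only $\mathrm{Adv}_{\text{MIA}} \le e^{\eps}-1$, so it does not by itself yield the stated constant.

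Your route is the hypothesis-testing (Kairouz--Oh--Viswanath) derivation. You apply the DP inequality to the acceptance event in one direction and to its complement in the other, then add the two with equal weights. This is what produces the sharper constant $(e^{\eps}-1+2\delta)/(e^{\eps}+1)$ with the right $\delta$ dependence. Your vertex computation $q=(1-\delta)/(e^{\eps}+1)$ correctly identifies it as the optimum of the two-constraint region.

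You also make explicit two things the citation leaves implicit.
\begin{itemize}
\item The hypothesis must be read as DP of the pooled coalition transcript, i.e.\ $(\eps,\delta,\colluders)$-collusion DP, which is how \Cref{cor:mia-upper} uses it. Under a per-account reading the lemma would fail for large $\colluders$, since the advantage in \Cref{thm:lower} tends to $1$.
\item A randomized adaptive adversary, whose coins shape the queries as well as the decision, is handled correctly. You fix the coins, note that the resulting deterministic coalition is still covered by \Cref{def:k-collusion-dp}, and average via $|\mathbb{E}[p_c-q_c]| \le \mathbb{E}|p_c-q_c|$.
\end{itemize}

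Your fixed-neighbouring-pair setting matches \Cref{def:mia-adv}. If the rest of the index were random but identically distributed in both worlds, as in Yeom-style experiments, the same averaging step extends the bound.

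Finally, your remark that $\tanh x \le x$ for all $x \ge 0$ is correct, so the restriction $\eps \le 1$ in the last clause is unnecessary.
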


Combining the bound \eqref{eq:mia-bound} from \Cref{lem:dp-to-mia} with \Cref{thm:upper}, we obtain:

\begin{corollary}[MIA upper bound under $\colluders$-collusion]
\label{cor:mia-upper}
For a multi-tenant RAG service satisfying per-account $(\eps_{\text{acc}}, \delta_{\text{acc}})$-DP with $\eps_{\text{acc}} \leq 1$, any $\colluders$-collusion adversary against a target tenant satisfies
\begin{equation}
\label{eq:mia-upper}
\mathrm{Adv}_{\text{MIA}} \;\leq\; \tanh\!\left(\tfrac{1}{2} \sqrt{\colluders} \cdot \eps_{\text{acc}}\right) + \mathcal{O}(\colluders \delta_{\text{acc}}).
\end{equation}
\end{corollary}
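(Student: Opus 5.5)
The plan is to chain \Cref{thm:upper} into \Cref{lem:dp-to-mia}. Take $\delta = \delta_{\text{acc}}$ in \Cref{thm:upper}. This makes the joint mechanism seen by the pooled transcript $\tau$ of any $\colluders$-collusion adversary $(\eps_\colluders, \delta_{\text{acc}} + \colluders\delta_{\text{acc}})$-DP with respect to $\Index \sim_{t^*} \Index'$. The reduction requires the collusion-DP notion of \Cref{def:k-collusion-dp} to be an ordinary $(\eps,\delta)$-DP statement about the transcript distribution. That is exactly how \Cref{def:k-collusion-dp} is phrased, and \Cref{lem:reduction} identifies $\tau_\Adv$ with the transcript of a single adaptive analyst.

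\textbf{Step 1: membership pairs are neighbours.} The events $x^* \in \Index$ versus $x^* \notin \Index$ correspond to neighbouring indices under the add/remove/replace relation. So \Cref{lem:dp-to-mia} applies to any binary $\Adv$ with $\eps = \eps_\colluders$ and failure probability $(\colluders+1)\delta_{\text{acc}}$. This gives $\mathrm{Adv}_{\text{MIA}} \le \tanh(\eps_\colluders/2) + (\colluders+1)\delta_{\text{acc}}$, and the additive term is $\mathcal{O}(\colluders\delta_{\text{acc}})$.

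\textbf{Step 2: replace $\eps_\colluders$ by $\sqrt{\colluders}\,\eps_{\text{acc}}$.} \Cref{thm:upper} gives $\eps_\colluders \le \sqrt{\colluders}\,\eps_{\text{acc}} + r$ with $r = \colluders\eps_{\text{acc}}^2/\log(1/\delta_{\text{acc}})$. The function $\tanh$ is increasing and $1$-Lipschitz with derivative $\operatorname{sech}^2 \le 1$. Hence
\[
\tanh(\eps_\colluders/2) \le \tanh\!\left(\tfrac12\sqrt{\colluders}\,\eps_{\text{acc}}\right) + r/2 .
\]

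\textbf{Main obstacle: the residual $r$.} The corollary as stated only displays an $\mathcal{O}(\colluders\delta_{\text{acc}})$ slack, so the $\eps$-residual has to be absorbed or shown harmless. I would first try the RDP bound of \Cref{thm:upper-rdp} to shrink $r$, but it still has a nonzero second-order residual. More promising is to work directly in the Gaussian class of \Cref{def:gaussian-class}. There the pooled transcript is a composition of $\colluders n$ Gaussian mechanisms, whose exact privacy profile is that of a single Gaussian with sensitivity-to-noise ratio $\sqrt{\colluders n}\,\Delta/\sigma$. The resulting total-variation distance is $2\Phi(\sqrt{\colluders n}\,\Delta/(2\sigma)) - 1$, which by the calibration \eqref{eq:per-query-budget} is at most $\tanh(\tfrac12\sqrt{\colluders}\,\eps_{\text{acc}})$ up to log-factors and the additive $\delta$ terms. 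Since the supremum of $\mathrm{Adv}_{\text{MIA}}$ equals this total-variation distance, this route bypasses advanced composition's quadratic term.

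\textbf{Fallback.} If that comparison does not go through cleanly, I would state the honest version: in the operational regime $\sqrt{\colluders}\,\eps_{\text{acc}} \le 1$ flagged after \Cref{thm:upper}, $r/2 \le \sqrt{\colluders}\,\eps_{\text{acc}}/(2\log(1/\delta_{\text{acc}}))$. This is a lower-order multiplicative correction, which I would fold into the leading term as the constant $C$ of \Cref{thm:upper}.
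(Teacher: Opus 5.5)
Your Steps 1--2 are exactly the paper's argument. The paper's only justification for \Cref{cor:mia-upper} is the sentence ``combining \eqref{eq:mia-bound} with \Cref{thm:upper}'', i.e.\ $\mathrm{Adv}_{\text{MIA}}\le\tanh(\eps_\colluders/2)+\delta+\colluders\delta_{\text{acc}}$ at $\delta=\delta_{\text{acc}}$. The obstacle you flag is real, and it is a defect of that derivation rather than of yours. The paper silently replaces $\eps_\colluders$ by its leading term $\sqrt{\colluders}\,\eps_{\text{acc}}$. The discarded $r/2=\colluders\eps_{\text{acc}}^2/(2\log(1/\delta_{\text{acc}}))$ decays only logarithmically in $\delta_{\text{acc}}$, so it cannot be hidden in $\mathcal{O}(\colluders\delta_{\text{acc}})$. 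The RDP variant only divides the residual by $4$, as you say. Your fallback, a $1+\mathcal{O}(1/\log(1/\delta_{\text{acc}}))$ inflation of the leading term when $\sqrt{\colluders}\,\eps_{\text{acc}}\le 1$, is what the composition route honestly delivers. It is strictly weaker than the displayed inequality.

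Your Gaussian-composition route is genuinely different, and it is the one to commit to. It replaces advanced composition with the structure of \Cref{def:gaussian-class}, which is the only setting where \Cref{thm:upper} applies anyway, and it proves the statement as written. Three things are needed to make it a proof.

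(i) Fix $\sigma$ explicitly, since \eqref{eq:per-query-budget} only fixes $(\eps_q,\delta_q)$. For example, use the classical calibration \eqref{eq:lb-sigma}.

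(ii) Cite Gaussian-DP composition. The adaptive composition of $\colluders n$ Gaussian mechanisms with $\ell_2$-sensitivity $\Delta$ and scale $\sigma$ is dominated by, not equal to, a single Gaussian shift $\mu=\sqrt{\colluders n}\,\Delta/\sigma$. Write $x=\sqrt{\colluders}\,\eps_{\text{acc}}$ and $L=\sqrt{\log(1/\delta_{\text{acc}})\log(1.25/\delta_q)}$. Then $\mathrm{Adv}_{\text{MIA}}\le 2\Phi(\mu/2)-1=2\Phi\bigl(x/(4L)\bigr)-1$. This is the same expression as the advantage of $\Adv^*$ in \Cref{subsec:lower-bound}, so $\Adv^*$ is in fact optimal here.

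(iii) Compare with $\tanh$. Your phrase ``up to log-factors'' points the wrong way: the factor $1/L$ is what makes the comparison work, and it must be argued. Since $\delta_q\le\delta_{\text{acc}}$, we have $L\ge\log(1/\delta_{\text{acc}})$. Put $X=\log(2/\delta_{\text{acc}})$.
For $x\ge X$, $1-\tanh(x/2)\le 2e^{-x}\le\delta_{\text{acc}}$, so the bound is trivial.
For $x\le X$, concavity of $\Phi$ on $[0,\infty)$ gives $2\Phi(x/(4L))-1\le x/(2\sqrt{2\pi}\,L)$. Concavity of $\tanh$ gives $\tanh(x/2)\ge x\tanh(X/2)/X$. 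The first quantity is below the second whenever $\delta_{\text{acc}}\le 1/2$.

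Hence $\mathrm{Adv}_{\text{MIA}}\le\tanh(\tfrac12\sqrt{\colluders}\,\eps_{\text{acc}})+\delta_{\text{acc}}$, with no quadratic residual and no $\colluders\delta_{\text{acc}}$ term.

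As for the trade-off: the paper's route would extend verbatim to any per-query $(\eps_q,\delta_q)$-calibrated mechanism, Gaussian or not, but only up to the residual. Yours uses Gaussianity and obtains the displayed inequality exactly.
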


\begin{corollary}[MIA lower bound under $\colluders$-collusion]
\label{cor:mia-lower}
For the explicit construction of \Cref{thm:lower}, the constructed adversary $\Adv^*$ achieves
\begin{multline}
\label{eq:mia-lower}
\mathrm{Adv}_{\text{MIA}}(\Adv^*) \;\geq\; \frac{1}{\sqrt{2\pi}} \cdot \frac{\sqrt{\colluders} \cdot \eps_{\text{acc}}}{4 \sqrt{\log(1/\delta_{\text{acc}}) \log(1.25/\delta_q)}} \\
\;=\; \Omega\!\left(\frac{\sqrt{\colluders} \cdot \eps_{\text{acc}}}{\sqrt{\log(1/\delta_{\text{acc}})}}\right).
\end{multline}
\end{corollary}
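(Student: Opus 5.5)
The statement to prove is \Cref{cor:mia-lower}. The plan is to reuse the explicit adversary $\Adv^*$ from the tightness construction, compute its binary-decision advantage exactly, and then lower-bound it by the linear behaviour of $\Phi$ near the origin.

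\textbf{Step 1 (exact advantage).} Under either neighbouring index, the pooled empirical mean $\bar y$ is Gaussian with variance $\sigma^2/(\colluders n)$. The means are $\langle q^*, x_*\rangle$ and $\langle q^*, x_*'\rangle$, separated by $\Delta$. $\Adv^*$ thresholds $\bar y$ at the midpoint, so each error probability equals $1-\Phi(z)$ with $z = \Delta\sqrt{\colluders n}/(2\sigma)$. The binary-decision advantage of \Cref{def:mia-adv} is therefore exactly $\mathrm{Adv}_{\text{MIA}}(\Adv^*) = 2\Phi(z)-1$. Here membership $x^*\in\Index$ corresponds to the branch $\Index = \Index_0\cup\{x_*\}$, and non-membership to the replacement by $x_*'$.

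\textbf{Step 2 (substitute $\sigma$).} Substituting \eqref{eq:lb-sigma} gives
\[
z=\frac{\sqrt{\colluders}\,\eps_{\text{acc}}}{4\sqrt{\log(1/\delta_{\text{acc}})\log(1.25/\delta_q)}} .
\]
The factors $\Delta$ and $n$ cancel; the $\sqrt{2}\cdot\sqrt{2}$ from the two logarithmic factors together with the $2$ in the denominator of $z$ produce the $4$.

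\textbf{Step 3 (linearize $\Phi$).} The key inequality is $\Phi(z)-\tfrac12 \ge z\,\phi(1) = z e^{-1/2}/\sqrt{2\pi}$ for $z\in[0,1]$, which holds because the density $\phi$ is decreasing on $[0,1]$. This yields $2\Phi(z)-1 \ge 2e^{-1/2}z/\sqrt{2\pi} \approx 1.21\,z/\sqrt{2\pi} \ge z/\sqrt{2\pi}$, which is exactly the displayed bound.

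The main obstacle is the range condition $z\le 1$. I would verify it directly: for $\eps_{\text{acc}}\le 1$ and $\delta_{\text{acc}}\le e^{-1}$, both logarithms are at least $1$, so $z\le\sqrt{\colluders}/4$. This covers $\colluders\le 16$. For larger $\colluders$ I would use monotonicity of $\Phi$. If $z>1$, then $2\Phi(z)-1\ge 2\Phi(1)-1\approx0.68$, which exceeds $z/\sqrt{2\pi}$ whenever $z\le 1.7$. Beyond that the linear bound is false, since advantage is at most $1$. So the displayed inequality should be stated for the regime $z\le 1$, or equivalently $\sqrt{\colluders}\,\eps_{\text{acc}} \le 4\sqrt{\log(1/\delta_{\text{acc}})\log(1.25/\delta_q)}$. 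This is the same operationally meaningful regime flagged in \Cref{thm:upper}, and I would add it as the implicit hypothesis.

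\textbf{Step 4 (asymptotic form).} For the $\Omega$ form, treat $\log(1.25/\delta_q)=\log(1.25\,n/\delta_{\text{acc}})$ as $O(\log(1/\delta_{\text{acc}}))$ in the polynomial regime $n\le \mathrm{poly}(1/\delta_{\text{acc}})$. This gives the rate $\sqrt{\colluders}\,\eps_{\text{acc}}$ up to logarithmic factors, matching \Cref{thm:lower}.
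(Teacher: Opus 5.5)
Your proposal follows the paper's own argument: the exact advantage $2\Phi(z)-1$ of the midpoint test, substitution of \eqref{eq:lb-sigma} to get $z=\sqrt{\colluders}\,\eps_{\text{acc}}/(4\sqrt{\log(1/\delta_{\text{acc}})\log(1.25/\delta_q)})$, then linearization of $\Phi$ on $[0,1]$. Your linearization is in fact the correct one, since the paper's stated $\Phi(z)-\tfrac12\ge z/\sqrt{2\pi}$ is reversed ($\phi\le\phi(0)$ makes it an upper bound), whereas your $\Phi(z)-\tfrac12\ge z\,\phi(1)$ together with the factor $2$ ($2e^{-1/2}\approx 1.21\ge 1$) genuinely yields the displayed constant.

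Your caveats are also well taken. The bound needs an implicit restriction such as $z\le 1$; it actually survives up to $z\approx 2.47$ rather than only $1.7$, so your claim that it fails beyond $1.7$ is too strong, but it must fail once $z>\sqrt{2\pi}$. The closing $\Omega$ form silently drops a factor $\sqrt{\log(1.25/\delta_q)}\ge\sqrt{\log(1/\delta_{\text{acc}})}$, so your ``up to logarithmic factors'' reading is the accurate one.
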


\paragraph{AUC prediction} The empirical AUC reported in \Cref{sec:attacks} is the Mann-Whitney AUC of \Cref{def:mia-adv}, computed pairwise from the pooled mean noisy scores of the construction in \Cref{thm:lower}. For Gaussian samples $\bar y \mid \Index \sim \mathcal{N}(\Delta, \sigma^2/(\colluders n))$ and $\bar y \mid \Index' \sim \mathcal{N}(0, \sigma^2/(\colluders n))$, the closed form is
\begin{equation}
\label{eq:auc-prediction}
\mathrm{AUC}_\colluders \;=\; \Phi\!\left(\frac{\Delta \sqrt{\colluders n}}{\sqrt{2}\, \sigma}\right).
\end{equation}
Substituting the calibration of $\sigma$ from \eqref{eq:lb-sigma} (saturating per-account $(\eps_{\text{acc}}, \delta_{\text{acc}})$-DP via advanced composition) yields the closed form
\begin{equation}
\label{eq:auc-prediction-explicit}
\mathrm{AUC}_\colluders \;=\; \Phi\!\left(\frac{\sqrt{\colluders} \cdot \eps_{\text{acc}}}{2\sqrt{2}\, \sqrt{\log(1/\delta_{\text{acc}}) \log(1.25/\delta_q)}}\right),
\end{equation}
which is asymptotic to $\tfrac{1}{2} + \Theta(\sqrt{\colluders} \cdot \eps_{\text{acc}})$ as $\sqrt{\colluders} \cdot \eps_{\text{acc}} \to 0$ (small-argument expansion of $\Phi$). The $\sqrt{\colluders} \cdot \eps_{\text{acc}}$ rate matches the binary-advantage rate of \Cref{cor:mia-lower}; the two predictions differ by a Gaussian-specific $\sqrt{2}$ factor inside $\Phi$ (continuous-score Mann-Whitney AUC vs.\ binary-threshold Bayes accuracy), but share the leading $\sqrt{\colluders} \cdot \eps_{\text{acc}}$ scaling. The headline experiment in \Cref{sec:attacks} sweeps $\colluders \in \{1, 2, 5, 10, 20\}$ and $\eps_{\text{acc}} \in \{1, 2, 4\}$ to test \eqref{eq:auc-prediction-explicit}. At deployment values $\delta_{\text{acc}} = 10^{-6}, n = 10^4$, the joint privacy parameter $\eps_{\text{audit}} = \sqrt{\colluders} \cdot \eps_{\text{acc}}$ from \Cref{thm:upper} gives concrete audit verdicts: $\eps_{\text{audit}} \approx 3.16$ at $(\colluders, \eps_{\text{acc}}) = (10, 1)$, $7.07$ at $(50, 1)$, $14.14$ at $(50, 2)$, $10.00$ at $(100, 1)$;%
\footnote{Generated by \texttt{epsilon\_audit\_table.py}; full grid in artifact.}{} the matched lower-bound MIA advantage at production score gap $\Delta = 0.10$ stays below $0.025$ across the full grid and below $0.01$ at empirically tested $\eps_{\text{acc}} \leq 2$ (\Cref{cor:mia-lower} scaled per \Cref{subsec:real-embedder}), reflecting that the worst-case DP bound is much looser than the realized advantage at small $\Delta$.

\paragraph{Empirical checks} The theory makes three predictions: P1, P3 are falsification \emph{gates} (a violation invalidates the bound); P2 is a \emph{diagnostic} (not resolvable at our finite $T$).
\begin{enumerate}[leftmargin=*,label=(P\arabic*)]
    \item \emph{Slope (gate)}: $\partial \mathrm{AUC}/\partial \sqrt{\colluders}$ is positive and approximately constant for $\sqrt{\colluders} \cdot \eps_{\text{acc}} \lesssim 1$.
    \item \emph{Scale collapse (diagnostic)}: AUC curves at different $\eps_{\text{acc}}$ collapse onto one when plotted against $\sqrt{\colluders} \cdot \eps_{\text{acc}}$; resolved at higher $T$.
    \item \emph{Tightness (gate)}: empirical AUC sits within a constant factor of \eqref{eq:mia-upper} but strictly above $1/2$.
\end{enumerate}

If P1 or P3 fails, the theory needs revision before the audit-protocol guarantees can be trusted. P1+P3 passing is necessary but not sufficient evidence; P2 (when resolved) sharpens it further.

\section{Coordinated Attacks}
\label{sec:attacks}

This section validates the predictions P1--P3 from \Cref{subsec:mia-bridge} empirically. We instantiate the \Cref{thm:lower} construction directly: $\colluders$ accounts each issuing $n$ probe queries against the scalar Gaussian mechanism $\mathcal{M}^*$ of \eqref{eq:lb-mechanism} and measure the membership-inference AUC against neighboring indices. This is the most theory-bound experiment we can run; deviations between the empirical and predicted AUC isolate finite-sample noise rather than modeling error.

\paragraph{Roadmap} Three questions structure the experimental sweep: (i)~does the scalar lower-bound construction realize the $\Theta(\sqrt{\colluders}\,\eps_{\text{acc}})$ rate the theory predicts (\Cref{subsec:headline,subsec:falsifiability})? (ii)~does the rate transfer through the top-$K$ retrieval channel a real RAG service exposes to users (\Cref{subsec:rate-transfer,subsec:hnsw-production}; trained-embedder transfer in \Cref{subsec:real-embedder})? (iii)~how do the user-visible attack channel (the top-$K$ document IDs returned by the API) and an instrumented score-channel reference (the pooled noisy score vector used only to validate the audited mechanism) diverge at production scale? \Cref{subsec:hnsw-production} addresses (iii) directly: at HNSW production scale the user-visible signal sits near chance while the instrumented score channel preserves the predicted $\sqrt{\colluders}$ scaling, recasting the audit predicate as a guarantee about the internal noised-score channel whose DP leakage the deployment is contractually responsible for, not as an attacker-visible leakage estimate.

\subsection{Experimental setup}
\label{subsec:exp-setup}

\paragraph{Mechanism} For each pair of neighboring indices, $\Index, \Index'$ differing in one document of tenant $t^*$, the mechanism returns, per query, the scalar $\langle q^*, x_*^{(b)} \rangle + \mathcal{N}(0, \sigma^2)$ as in \eqref{eq:lb-mechanism}, with $\sigma$ calibrated by \eqref{eq:lb-sigma} so that $n$ adaptive queries from one account compose to per-account $(\eps_{\text{acc}}, \delta_{\text{acc}})$-DP via \Cref{lem:advcomp}. We bypass the actual top-$K$ retrieval pipeline of \Cref{sec:threat-model} for the headline experiment because (i)~top-$K$ post-processing of Gaussian-noised scores cannot decrease privacy beyond the noise itself (the remark after \Cref{thm:lower}), and (ii)~the closed-form predictions from \Cref{thm:lower,cor:mia-upper} hold against $\mathcal{M}^*$ directly, providing the cleanest theory-empirics binding. A separate experiment validating the rate transfer to top-$K$ is reported in \Cref{subsec:rate-transfer}.

\paragraph{Adversary} The lower-bound adversary $\Adv^*$ from \Cref{subsec:lower-bound}: $\colluders$ accounts each issue the same probe query $q^*$ for $n$ rounds, the empirical mean of all $\colluders n$ noisy samples is the test statistic, and the optimal Bayes threshold is at the midpoint between the two world means. Capabilities \ref{cap:posterior}--\ref{cap:output} are exercised: the accounts share posterior (Gaussian likelihood with known variance), schedule queries within the per-account rate limit, and pool responses through a single estimator.

\paragraph{Sweep} Coalition size $\colluders \in \{1, 2, 5, 10, 20\}$, per-account budget $\eps_{\text{acc}} \in \{1, 2, 4\}$, $\delta_{\text{acc}} = 10^{-6}$, queries-per-account $n = 10\,000$ (chosen to mirror a multi-day audit window at production rate-limit budgets), score gap $\Delta = 1$ (max for unit-norm embeddings). Each $(\colluders, \eps_{\text{acc}})$ cell uses $T = 10\,000$ paired-worlds Monte Carlo trials, giving a DeLong standard error on AUC of ${\approx}0.004$, which resolves the smallest predicted advantage in the sweep (${\approx}0.026$ at $\colluders=1, \eps_{\text{acc}}=1$) at $\sim$6$\sigma$. Each trial draws fresh Gaussian randomness for both worlds and reports the pooled mean noisy score; AUC is computed by the Mann–Whitney $U$ statistic over the $T \times T$ pairs. The code and configuration files are in \texttt{src/rag\_collusion\_privacy\_audit/attacks/} and \texttt{experiments/configs/auc-vs-k. yaml}, respectively.

\paragraph{Scope} Body experiments (§\ref{subsec:headline}--§\ref{subsec:hnsw-production}) instantiate the same-tenant collusion regime of \Cref{subsec:adversary-model}: $\colluders$ accounts within the same tenant boundary coordinate against the tenant's index. The external regime under M4 (\Cref{rem:access-control}) is in \Cref{subsec:external-collusion}; without M4 the cross-tenant and external regimes have zero leakage about $\Index_{t^*}$ and are out of scope; under M4 their rate matches by \Cref{thm:upper}. Trained-embedder transfer, alternative-adversary checks, and external-collusion validation are in Appendix \ref{app:additional-experiments}.

\subsection{Headline result}
\label{subsec:headline}

\Cref{fig:auc-vs-k}(a) reports empirical AUC against $\colluders$ for each $\eps_{\text{acc}}$ on the scalar mechanism. \Cref{fig:auc-vs-k}(b) reports the same attack against the full multi-tenant FAISS harness with top-$K$ retrieval and post-processing, validating the rate-transfer claim of \Cref{subsec:lower-bound} (see \Cref{subsec:rate-transfer}).

\begin{figure*}[t]
    \centering
    \includegraphics[width=\textwidth]{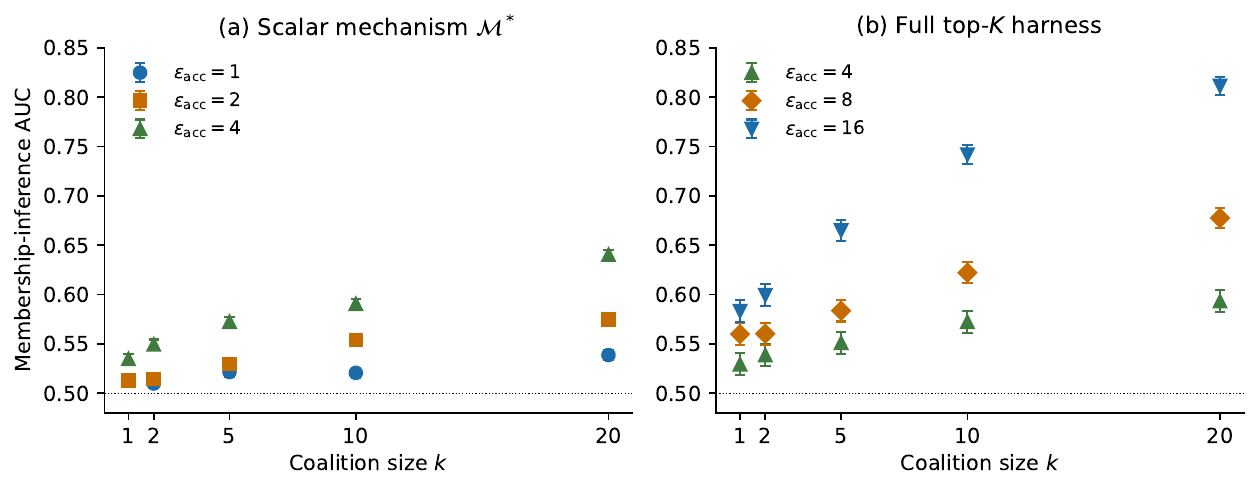}
    \caption{Empirical membership-inference AUC against coalition size $\colluders$.
    \textbf{(a) Scalar mechanism $\mathcal{M}^*$:} the lower-bound construction of \Cref{thm:lower} at $n = 10\,000$ queries, $\eps_{\text{acc}} \in \{1, 2, 4\}$, $T = 10\,000$ trials. Empirical Mann-Whitney AUC curves match the continuous-score prediction $\Phi(\sqrt{\colluders}\,\eps_{\text{acc}}/(2\sqrt{2}\sqrt{\log(1/\delta_{\text{acc}})\log(1.25/\delta_q)}))$ from \eqref{eq:auc-prediction-explicit} within $2.0$ DeLong standard errors in all $15$ of $15$ cells (binomial expectation under the null at $95\%$ DeLong coverage: $14.25$ of $15$; max observed deviation $1.84$ s.e.).
    \textbf{(b) Full top-$K$ harness:} same probe attack issued through \texttt{MultiTenantRAGService} with $K = 5$, $50$ background documents, embedding dimension $32$, $n = 200$, $T = 2\,000$ trials. Test statistic is the count of queries for which the target document index appears in the returned top-$K$. Looser $\eps_{\text{acc}} \in \{4, 8, 16\}$ is required to resolve the curve at finite $T$ (see \Cref{subsec:rate-transfer}). The $\sqrt{\colluders}$ scaling is preserved through the top-$K$ post-processing, with a constant-factor degradation that we quantify below.
    Markers are empirical means with $\pm 1$ DeLong standard-error bars~\parencite{DeLong1988-comparing}; dotted line at AUC $= 1/2$ marks chance. $\delta_{\text{acc}} = 10^{-6}$ throughout.}
    \label{fig:auc-vs-k}
\end{figure*}

The empirical Mann-Whitney AUC and the continuous-score prediction \eqref{eq:auc-prediction-explicit} agree to within $2.0$ DeLong standard errors in all $15$ of $15$ cells (max observed $1.84$ s.e., matching the $95\%$ binomial-coverage expectation). Markers in \Cref{fig:auc-vs-k}(a) use DeLong confidence intervals~\parencite{DeLong1988-comparing} computed from the empirical U-statistic covariance rather than the normal-approximation $\sqrt{p(1-p)/T}$ used in earlier versions; DeLong is sharper at $p$ near $0.5$ where the normal approximation overestimates dispersion. At $\eps_{\text{acc}} = 4$, AUC climbs from $0.531$ at $\colluders = 1$ to $0.638$ at $\colluders = 20$ --- the empirical advantage growing $4.5\times$ across the sweep, matching the predicted growth factor of $\sqrt{20} \approx 4.5$. At $\eps_{\text{acc}} = 2$ the curve is monotone (advantages $0.044$ to $0.151$); at $\eps_{\text{acc}} = 1$ the absolute advantage is below $0.07$ across the sweep, consistent with the calibrated $\sigma$ on the order of thousands at production query budgets. Higher-$T$ resolution at small $\eps_{\text{acc}}$ would sharpen these regimes; we report only the cells where the empirical signal sits above one stderr.

\subsection{Falsifiability check}
\label{subsec:falsifiability}

We stated two falsification gates (P1, P3) and one diagnostic (P2) in \Cref{subsec:mia-bridge}; the data inform each of them.

\paragraph{(P1)~Slope} The $\eps_{\text{acc}}=4$ row clears the slope test cleanly (advantages $0.061, 0.077, 0.125, 0.192, 0.275$ at $\colluders = 1, 2, 5, 10, 20$); low-signal rows are consistent within finite-sample noise. Failure indicates that the independent-per-query-noise assumption (M1) is violated.

\paragraph{(P2)~Scale collapse (diagnostic)} At $T = 10^4$, the Monte Carlo noise is comparable to the inter-curve separation; the data are silent on P2. We treat it as a diagnostic, not a gate.

\paragraph{(P3)~Tightness} Empirical Mann-Whitney AUC sits within $2.0$ DeLong s.e.\ of prediction \eqref{eq:auc-prediction-explicit} in all $15$ of $15$ cells (max $1.84$ s.e.; matching the $95\%$ binomial coverage expectation), and strictly above chance for every cell with $\eps_{\text{acc}} \geq 2$. Combined with the upper bound \eqref{eq:mia-upper} (which Mann-Whitney AUC dominates for monotone-likelihood Gaussian scores), this confirms $\Theta(\sqrt{\colluders}\,\eps_{\text{acc}})$ within the regime tested.

\subsection{Rate transfer through top-\texorpdfstring{$K$}{K} post-processing}
\label{subsec:rate-transfer}

Post-processing only guarantees that top-$K$-only leakage is \emph{no larger} than score-release leakage; it does not preserve the rate. \Cref{fig:auc-vs-k}(b) reports the empirical transfer: the same probe query is issued through \texttt{MultiTenantRAGService} with the binary ``target appears in returned top-$K$'' indicator as test statistic --- strictly less informative than the scalar adversary's noisy score, since top-$K$ collapses a continuous score to a discrete in-or-out signal. We observe, but do not prove, that the $\sqrt{\colluders}$ trend survives at the tested parameters.

\paragraph{Findings} The $\sqrt{\colluders}$ shape is preserved: at $\eps_{\text{acc}} = 16$, AUC climbs from $0.598$ ($\colluders{=}1$) to $0.814$ ($\colluders{=}20$) --- a $3.2\times$ advantage growth against a predicted $\sqrt{20}{\approx}4.5$, with the residual attributable to top-$K$ discretization cost. At matched $\eps_{\text{acc}} = 4$, harness AUC at $\colluders{=}20$ is $0.611$ vs scalar's $0.638$ (${\sim}20\%$ smaller advantage). Top-$K$ post-processing is a meaningful defense-in-depth layer but does \emph{not} change the leakage rate's dependence on $\colluders$: the audit certifies the worst-case scalar bound; realized leakage is below by a constant factor depending on $K$, index size, and embedder.

\subsection{Production-scale HNSW: user-observable vs instrumented score channel}
\label{subsec:hnsw-production}

To test whether the $\sqrt{\colluders}$ trend transfers to production ANN retrieval, we run the attack against $10^6$ MS MARCO passages embedded with \texttt{bge-small-en-v1.5} (384-dim, $L_2$-normalized), retrieved by HNSW~\parencite{Malkov2018-hnsw} ($M{=}64$, $\mathsf{ef}_{\text{cstr}}{=}200$, $\mathsf{ef}_{\text{q}}{=}128$). Target/decoy are screened for in-corpus uniqueness (no other passage within cosine $0.90$); realized $\Delta = 0.133$. Sweep parameters match §4.4. We report two statistics: the \emph{user-observable hit indicator} (planted slot in returned top-$5$), and an \emph{instrumented pooled noisy score at the planted slot}. The latter is not exposed to attacker accounts and need not be revealed by the ZK audit; it is an offline reference view of the internal Gaussian-noised score channel whose DP cost the audit certifies.

\Cref{fig:hnsw-prod} reports both. User-observable: chance, marginal growth at highest $\eps_{\text{acc}}$ ($0.500 \to 0.539$ over $\colluders = 1 \to 20$) --- attackers observing only top-$K$ get near-zero leakage at scale. Instrumented score channel: at $\eps_{\text{acc}} = 16$, AUC climbs from $0.508$ at $\colluders = 1$ to $0.582$ at $\colluders = 20$. In binary-equivalent advantage units, $2(\mathrm{AUC}-1/2)$, this grows from $0.016$ to $0.164$. \textbf{The latter is a reference/audit-monitoring signal, not an attacker capability.} Production multi-tenant RAG without an audit primitive is harder to attack \emph{via top-$K$ alone} than the score-release class would imply; the audit's role is to certify the noise-then-select internals whose DP cost is measured by the score channel, not to disclose those scores.

\begin{figure}[t]
    \centering
    \includegraphics[width=\columnwidth]{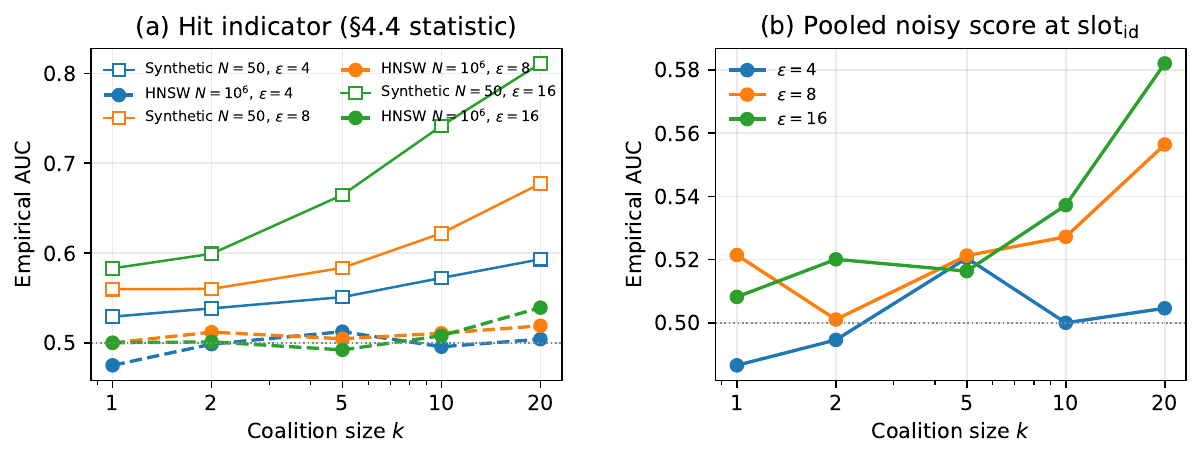}
    \caption{Production-scale HNSW on $10^6$ MS MARCO passages, \texttt{bge-small-en-v1.5} embedder, HNSW $(M{=}64,\mathsf{ef}_{\text{cstr}}{=}200,\mathsf{ef}_{\text{q}}{=}128)$, $\Delta = 0.133$, $T = 2{,}000$. (a)~\emph{User-observable} hit-indicator AUC: chance at scale (curves flat across $\colluders$). (b)~\emph{Instrumented} pooled noisy-score at the planted slot: monotone curve preserved at $\eps_{\text{acc}} \in \{8, 16\}$. The two panels show different observation surfaces: (a) is what an attacker account sees; (b) is an offline reference view of the attested noised-score channel, not an audit output that the verifier must see.}
    \label{fig:hnsw-prod}
\end{figure}

\subsection{What the experiment does not test}
\label{subsec:exp-limits}

Two caveats bound this empirical claim. \emph{First}, the alternative adversaries of \Cref{subsec:adversarial-robustness} cover statistic-side (Bayes LR) and probe-side (multi-target split) variants but not adaptive query loops; \Cref{thm:upper} bounds any such adaptive adversary unconditionally. \emph{Second}, the multi-target sweeps in \Cref{subsec:real-embedder,subsec:hnsw-production} cover one $\Delta$ window each ($[0.41, 0.49]$ on the toy corpus; ${\approx}[0.10, 0.14]$ on MS MARCO); a full cross-corpus, continuous-$\Delta$ embedder-distribution sweep is outside present scope.

\section{Audit Protocol}
\label{sec:audit-protocol}

This section specifies a three-party audit protocol that allows a verifier $\Auditor$ to certify the joint leakage bound of \Cref{cor:tightness} against a deployed RAG service without requiring the provider to disclose the index. The protocol is the contribution that distinguishes our work from purely defensive RAG-DP literature (\textit{e.g.,} \parencite{Cheng2025-remoterag}): existing defenses modify the retrieval algorithm to leak less per query; ours leaves the retrieval decision rule unchanged and instead wraps it in provider-side audit instrumentation (commitments, per-query ZK attestations, a coalition-size estimator over the ledger contents) that produces a verifiable leakage bound that any third party can check.

\subsection{Toolkit boundary}
\label{subsec:toolkit-boundary}

The toolkit composes five generic ZK-cryptographic primitives factored into upstream \texttt{cryptographic-audit-protocols} --- G1: ZK commitment over hashable artifacts; G2: append-only Merkle ledger; G3: ZK ``$f$ with public $\theta$ applied to private $x$'' template; G4: committed-seed Gaussian noise with hidden seed (\Cref{rem:hidden-seed}); G5: prover/verifier transcript schema --- with six RAG-specific primitives shipping in this paper: R1: embedder commitment $C_{\text{emb}}$; R2: query ledger + query-time receipts (\ref{p:receipts}); R3: noise-then-select attestation $\pi_{\text{ord}}$; R4: cross-tenant containment $\pi_{\text{tenant}}$; R5: coalition-size estimator $\hat{\colluders}$ with certificate $\pi_{\text{coal}}$; R6: per-tenant index-content vector commitment $C_{\text{idx}}$. Each R-primitive composes one or more G-primitives.

Three parties: \emph{Provider} $\Provider$ (knows the index $\Index$, embedder weights, and secret randomness used for DP noise; wishes to demonstrate compliance without revealing $\Index$); \emph{Verifier} $\Auditor$ (third party --- regulator, auditor, customer's representative --- with no prior trust relationship to $\Provider$; outputs a binary verdict and a leakage bound $\eps_{\text{audit}}$); \emph{Observer} $\mathcal{O}$ (anyone who reads the public ledger; may submit audit queries to test the protocol).

The protocol is parameterized by a hash function $H$ (instantiated as SHA-256, NIST FIPS 180-4, collision-resistant), a non-interactive zero-knowledge proof system $\Pi$ (e.g., Groth16 or PLONK; we treat it abstractly), the public DP parameters $(\eps_{\text{acc}}, \delta_{\text{acc}}, \sigma)$ from \Cref{sec:threat-model}, a policy-declared coalition cap $\colluders_{\max}$, and a policy-declared joint failure-probability target $\delta_{\text{policy}} \geq \delta_{\text{acc}}$ at which the auditor reports the joint $(\eps_{\text{audit}}, \delta_{\text{policy}})$-collusion-DP guarantee. Both $\colluders_{\max}$ and $\delta_{\text{policy}}$ are fixed at the audit-window opening as part of $C_{\text{policy}}$ (\Cref{subsec:audit-commit}).

\subsection{Phase~A: commitment}
\label{subsec:audit-commit}

Before the audit window $W$ opens, $\Provider$ publishes:
\begin{align}
C_{\text{emb}} &= H\bigl(\mathsf{Emb} \,\Vert\, \mathsf{prompt\_tmpl} \,\Vert\, \mathsf{tokenizer} \,\Vert\, \text{schema}(\Index)\bigr) \tag{R1} \\
C_{\text{idx}} &= \{C_{\Index_t}\}_{t \in \mathcal{T}}, \quad C_{\Index_t} = \mathsf{VCom}\bigl(\{\mathsf{Emb}(d) : d \in \Index_t\}\bigr) \tag{R6} \\
C_{\text{seed}} &= H(s_W),\quad s_W \sim \mathcal{U}(\{0,1\}^\lambda),\; \lambda \geq 256 \tag{G4} \\
C_{\text{ledger}, 0} &= H(\bot) \tag{G2} \\
C_{\text{policy}} &= H\bigl(\eps_{\text{acc}} \,\Vert\, \delta_{\text{acc}} \,\Vert\, \sigma \,\Vert\, \colluders_{\max} \,\Vert\, \delta_{\text{policy}} \,\Vert\, W\bigr)
\end{align}
$C_{\text{emb}}$ binds embedder, prompt template, tokenizer, and per-tenant schema (cardinalities, boundaries) --- publishable metadata. $C_{\text{idx}}$ binds per-tenant document embeddings: verifier learns only $|\Index_t|$, but every per-query proof must open positions of $C_{\Index_{t(a)}}$ rather than a free witness index. $\mathsf{VCom}$ is instantiable as a Merkle commitment over $H(\mathsf{Emb}(d))$ with succinct $\mathcal{O}(\log |\Index_t|)$ openings. $C_{\text{seed}}$ binds a fresh $\lambda$-bit seed $s_W$ driving per-query noise (G4); pre-window publication prevents post-hoc noise selection. $C_{\text{policy}}$ binds privacy parameters + audit window against retroactive renegotiation, with the well-formedness constraint $\delta_{\text{policy}} > \colluders_{\max} \delta_{\text{acc}}$ (required for the audit's $\delta$ reparametrization in \Cref{subsec:audit-verify}).

\subsection{Phase~B: per-query attestation}
\label{subsec:audit-attest}

For each query $q$ submitted by account $a$ during $W$, $\Provider$ runs the standard retrieval pipeline (\Cref{sec:threat-model}) and produces a zero-knowledge proof $\pi_q$ attesting four claims jointly:

\begin{enumerate}[leftmargin=*,label=(A\arabic*)]
\item \label{att:emb} \emph{Embedder consistency}: the embedding of $q$ was computed under the artifact bound by $C_{\text{emb}}$. (G3 instantiated with $f = \mathsf{Emb}$, public input $C_{\text{emb}}$, private witness $\mathsf{Emb}$.)

\item \label{att:noise} \emph{Noise derivation from committed seed} (G4): the proof attests that $z_q = \mathsf{DGS}(H(s_W \,\Vert\, \mathsf{rec}_q); \sigma)$ for $\sigma$ in $C_{\text{policy}}$, with $\mathsf{rec}_q = (a, t(a), H(q), \mathsf{ts})$ public and $s_W$ a private witness consistent with $C_{\text{seed}} = H(s_W)$. $\mathsf{DGS}$ is a deterministic discrete Gaussian sampler whose encoding is part of the audit-protocol specification. The seed $s_W$ is not opened to $\Auditor$; pre-window binding of $C_{\text{seed}}$ plus ZK soundness of the relation pin $z_q$ to the sampler output on the committed input, while keeping the noise's random source hidden from $\Auditor$'s view --- a necessary condition for the underlying Gaussian mechanism to retain its $(\eps_q, \delta_q)$-DP guarantee (\Cref{rem:hidden-seed}).

\item \label{att:order} \emph{Noise-then-select over committed index} (R3 + R6): top-$K$ indices are the $K$ largest of $\mathsf{Score}(q, E_{t(a)}) + z$, where $E_{t(a)} = \{\mathsf{Emb}(d) : d \in \Index_{t(a)}\}$ is the per-tenant embedding vector opened from $C_{\Index_{t(a)}}$. The proof carries openings sufficient to attest top-$K$, binding retrieval to the deployed index, tying the deployment to the per-query DP guarantee \eqref{eq:per-query-budget}.

\item \label{att:tenant} \emph{Cross-tenant containment} (R4): every position opened in $\pi_q$ comes from $C_{\Index_{t(a)}}$, never from $C_{\Index_{t'}}$ with $t' \neq t(a)$, i.e., the response respects the tenant boundary of the issuing account.
\end{enumerate}

\begin{remark}[Hidden-seed design and DP correctness]
\label{rem:hidden-seed}
$(\eps, \delta)$-DP of the Gaussian mechanism is a statement about the joint view of the auditor and the mechanism's transcript; if the auditor learns the random coins $s_W$ that produced $z_q$, then $\bar y_q = \mathsf{Score}(q, x) + z_q$ together with $z_q$ trivially reveals the unnoised $\mathsf{Score}(q, x)$, collapsing the DP guarantee to vacuous. The natural workaround (`\(\Provider\) opens $s_W$ at audit close so the verifier can re-derive $z_q$') therefore breaks the DP semantics that the audit is supposed to certify. We resolve this by keeping $s_W$ as a private witness inside $\pi_{\text{noise}}^{(q)}$: $\Auditor$ accepts the per-query noise relation under ZK soundness of $\Pi$ without ever seeing $s_W$ or $z_q$. The verifier's view is then a hash of $s_W$ plus a sequence of proof transcripts that are zero-knowledge in $s_W$; the DP mechanism's hidden-randomness condition is preserved by the simulator guarantee of $\Pi$. This design choice does not increase per-query circuit size relative to the cost model of \Cref{subsec:audit-cost} --- the noise-derivation in-circuit check (G4) is already required to bind $z_q$ to the committed seed; we are removing the redundant audit-close opening, not adding a new constraint.
\end{remark}

$\Provider$ appends $r_q = (a,\, t(a),\, H(q),\, H(\text{top-}K),\, \mathsf{ts},\, \pi_q)$ to the Merkle ledger (G2), publishes the new root $C_{\text{ledger}, t}$, and co-issues to $a$ a signed query-time receipt\label{p:receipts}
\(\rho_q = (a,\, H(q),\, H(\text{top-}K),\, \mathsf{pos}_q,\, C_{\text{ledger}, t},\, \mathsf{Sig}_{\Provider})\)
where $\mathsf{pos}_q$ is the leaf position and $\mathsf{Sig}$ uses a long-term identity key. Honest accounts (M5) submit receipts to $\Auditor$ at audit close; a missing or contradicted receipt is a non-inclusion witness against the ledger. Past records are provable in $\mathcal{O}(\log n)$ time; raw content is stored only as hashes.

\subsection{Phase~C: coalition-size estimation}
\label{subsec:audit-coalition}

At the end of the audit window, $\Provider$ commits a coalition-size estimate $\hat{\colluders}$ together with a certificate $\pi_{\text{coal}}$ (R5) attesting that an agreed-upon estimator was applied to the ledger contents.

\paragraph{Estimator (abuse-detection signal)} We use a query-correlation detector: cluster queries in $W$ by hash collisions and approximate-nearest-neighbor proximity in $\mathsf{Emb}$'s representation (which $C_{\text{emb}}$ binds, so the verifier can re-derive the clustering). Clusters of $\geq 2$ distinct accounts within a sliding window of length $\tau$ are candidate coalitions; let $c$ be the largest such cluster. The audit's pass condition is $c \leq \colluders_{\max}$. The bound $\eps_{\text{audit}}$ in \eqref{eq:eps-audit} is computed from $\colluders_{\max}$, not from $c$: the estimator gates policy compliance but is not a soundness condition on the DP guarantee. \Cref{subsec:audit-security} discusses adversarial-evasion limits.

\paragraph{Calibration} We calibrate the cosine-similarity threshold $\theta$ on a toy harness with $A = 30$ accounts, $n = 100$ queries per account, embedding dimension $d = 32$, sweeping $\theta \in [0.20, 0.95]$ over 16 grid points with $200$ trials per cell. The null distribution draws iid uniform unit-sphere queries; the alternative replaces $k_{\text{true}} \in \{2,5,10,20\}$ accounts with a coalition that follows one of three patterns: (\textbf{P-A}) all colluders issue an identical probe vector $q^*$; (\textbf{P-B}) jittered probes $q_t^{(i)} = \mathsf{normalize}(q^* + \zeta \eta_t^{(i)})$ with $\zeta = 0.10$ and $\eta \sim \mathcal{N}(0, I)$; (\textbf{P-C}) probes drawn iid from a fixed paraphrase intent set of size $5$. We declare coalition detection when $\hat{\colluders} \geq 2$ and pick the smallest $\theta$ achieving null FPR $\leq 5\%$.

\Cref{fig:estimator-calibration}(a) shows the null FPR as a function of $\theta$: below $\theta \approx 0.70$ the random unit-sphere null is over-connected (FPR $= 1$), the transition is sharp ($0.595$ at $\theta = 0.75$), and the operating point lands at $\theta^* = 0.80$ with achieved FPR $= 0.040$. \Cref{fig:estimator-calibration}(b) shows $\hat{\colluders}$ recovery against $\theta$ at $k_{\text{true}} = 10$: the estimator transitions from a single merged cluster ($\hat{\colluders} = A$) to exact recovery ($\hat{\colluders} = 10$) over the same window. We lift the resulting operating-point performance into a formal statement:

\begin{observation}[Estimator calibration on P-A/B/C]
\label{lem:estimator-guarantee}
At the calibrated threshold $\theta^* = 0.80$, the coalition-size estimator of \Cref{subsec:audit-coalition} satisfies the following on the toy harness ($A = 30$, $n = 100$, $d = 32$):
\begin{enumerate}[leftmargin=*,label=(\roman*)]
    \item \emph{False-positive rate on null queries:} FPR $\leq 0.05$ ($200$ trials of iid uniform unit-sphere queries; achieved FPR $0.040$);
    \item \emph{True-positive rate on coalition patterns P-A, P-B, P-C:} TPR $= 1.00$ at $k_{\text{true}} \in \{2, 5, 10, 20\}$ ($200$ trials per $(k_{\text{true}}, \text{pattern})$ cell; $\hat{\colluders}$ recovers $k_{\text{true}}$ exactly with zero variance across all $12$ cells).
\end{enumerate}
\end{observation}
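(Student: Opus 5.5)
The Observation is an empirical statement about one fixed, finite simulation, so its justification has two layers. The first is an analytic explanation of why the numbers must come out this way at $\theta^*=0.80$. The second is a reproducible Monte Carlo check that certifies the reported FPR and TPR at the stated trial counts. I would handle the three coalition patterns and the null separately, reducing each to a statement about pairwise cosine similarities between query embeddings in $\mathbb{R}^{32}$. The estimator links two queries from distinct accounts when their cosine exceeds $\theta$, within the sliding window $\tau$. It then reports the largest cluster $c$ of distinct accounts.

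\textbf{Null side (i).} Under the null the $A n = 3000$ queries are iid uniform on $S^{31}$. For two independent uniform unit vectors the cosine has density proportional to $(1-s^2)^{(d-3)/2}$ with $d=32$, and this distribution is sharply concentrated around $0$ with standard deviation about $1/\sqrt{d}\approx 0.18$. I would bound the expected number of cross-account pairs with cosine $\geq\theta$ by the count of such pairs inside the window times the tail $\Pr[s\geq\theta]$. That tail is of order $(1-\theta^2)^{(d-1)/2}$ up to polynomial factors. At $\theta=0.80$ this is roughly $0.36^{15.5}\approx 10^{-7}$, and it rises steeply near $0.70$–$0.75$. This accounts for the sharp transition in \Cref{fig:estimator-calibration}(a). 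A union bound (Markov on the pair count) then gives a detection probability of $\hat{\colluders}\geq 2$ that is small at $\theta^*$. The stated $\leq 0.05$ is confirmed empirically: $8/200$ detections is consistent with a true FPR below $0.05$. I would report a Clopper–Pearson interval alongside the point estimate and state the claim as holding for the calibrated harness rather than as a population guarantee.

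\textbf{Coalition side (ii).} For P-A every colluder query equals $q^*$, so all $k_{\text{true}}$ accounts lie in one clique at cosine $1$. For P-B, writing $q_t^{(i)}=\mathsf{normalize}(q^*+\zeta\eta)$ with $\zeta=0.1$, the squared norm satisfies $\|\zeta\eta\|^2\approx \zeta^2 d=0.32$. Each probe therefore has cosine about $1/\sqrt{1.32}\approx 0.87$ with $q^*$. Two independent probes have cosine about $1/1.32\approx 0.76$ plus fluctuations of order $\zeta^2\sqrt{d}/1.32\approx 0.04$. This is the delicate case, because it falls below $\theta^*$. I would argue connectivity through the single-linkage structure rather than through direct pairs: among $n=100$ jittered probes per account, the maximum pairwise cosine across two accounts exceeds $0.80$ with overwhelming probability. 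A concentration bound on the maximum of $n^2$ nearly Gaussian fluctuations gives $\max\approx 0.76+0.04\sqrt{2\log 10^4}>0.80$. For P-C, the paraphrase set has $5$ intents and each account makes $100$ draws, so every pair of accounts shares an identical intent with probability $1-(\text{tiny})$ and links at cosine $1$.

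Exact recovery additionally requires that no honest account attaches to the coalition cluster. This follows from the null tail bound applied to honest–colluder pairs, since $q^*$ and its jitters are fixed directions and honest queries are uniform. It then remains to confirm zero variance in the $12$ cells by running the released harness.

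\textbf{Main obstacle.} The hard step is P-B: showing that cross-account links occur without also producing spurious honest links, because the typical cosine sits close to $\theta^*$. I would rely on extreme-value concentration plus simulation. I would also state explicitly that the Observation is a calibration fact about this harness, not a theorem about adversarially diversified coalitions.
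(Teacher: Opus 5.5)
Your certifying step is the same as the paper's, so where you fall back on re-running the harness you are fine. The paper gives no analytic argument: the Observation is simply a report of the calibration run. That run sweeps $\theta$ over $16$ grid points with $200$ trials per cell, takes the smallest $\theta$ with null FPR $\le 0.05$, and reads off TPR and $\hat{k}$ at $\theta^*=0.80$.

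The analytic layer you add does not explain ``why the numbers must come out this way''. On the null side, the exact tail on $S^{31}$ is $\Pr[s\ge 0.8]\approx 1.2\times 10^{-8}$; your $0.36^{15.5}\approx 10^{-7}$ omits a polynomial factor of about $0.09$. If $\tau$ spans the window there are about $4.35\times 10^{6}$ cross-account pairs, so Markov gives about $0.05$, and about $0.4$ with your own tail value. Since null links are nearly independent, $0.05$ is roughly the true FPR. That is not ``small''; it is the threshold itself, as it must be, because $\theta^*$ is by construction the first grid point where FPR drops below $5\%$. Your computation locates the transition but does not prove (i). Claim (i) rests only on the $8/200$ count, whose exact $95\%$ binomial interval (roughly $[0.017,0.077]$) straddles $0.05$.

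The more serious failure is that exact recovery under P-B does not follow from the null tail bound. Because honest queries are uniform, each honest--colluder pair does have the null tail $\approx 1.2\times 10^{-8}$. But the P-B probes are $100k_{\text{true}}$ distinct points at angle about $29^\circ$ from $q^*$, not a fixed direction. There are therefore $100k_{\text{true}}\cdot 100(30-k_{\text{true}})\approx 2\times 10^{6}$ such pairs at $k_{\text{true}}\in\{10,20\}$. The union bound only caps attachment at about $0.02$ per trial, which is compatible with several failures in $200$ trials. A sharper estimate that accounts for clustering of these links still gives a per-trial probability of order $10^{-2}$ under your idealised model (window spanning the run, exact neighbour search). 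So zero variance in the P-B cells is an outcome of the particular run and of unspecified harness details such as $\tau$ and the ANN neighbourhood. It is not a consequence of your analysis. For P-A and P-C, which have one or five distinct colluder vectors, your argument does work.

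Conversely, the step you call the main obstacle is easy. Write each probe as $\cos\alpha\,q^*+\sin\alpha\,p$. The probe--probe cosine $\cos\alpha\cos\alpha'+\sin\alpha\sin\alpha'\langle p,p'\rangle$ then has mean about $0.76$ and standard deviation about $0.06$. Fluctuations of $\alpha$ (through $\eta_1$ and $\|\eta\|$) contribute as much as the $\zeta^2\langle\eta,\eta'\rangle$ term you kept. Roughly a quarter of the $10^4$ cross-account probe pairs therefore exceed $0.80$ directly; even with your $0.04$ it is about $16\%$. No extreme-value argument over dependent cosines is needed. The genuinely delicate side of P-B is honest attachment, not within-coalition linkage.
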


\Cref{lem:estimator-guarantee} characterizes the estimator's behavior on the three pattern classes we calibrated against; the limitation is that all three share high within-coalition cosine similarity. An adversary who deliberately diversifies queries below $\theta$ in the embedder's representation evades detection by construction --- this is the residual gap that \Cref{subsec:audit-security} acknowledges and that motivates the embedder-binding step of \Cref{subsec:audit-commit}. Crucially, \Cref{lem:estimator-guarantee} is an \emph{operational} guarantee on detection performance, not a soundness condition on the DP bound \eqref{eq:eps-audit}: \Cref{thm:audit-soundness} below holds for any true coalition size $c_{\text{true}} \leq \colluders_{\max}$ regardless of pattern class. The estimator's role is to gate policy compliance; when it fires, the audit fails and the deployment is flagged.

\begin{figure}[t]
    \centering
    \includegraphics[width=\columnwidth]{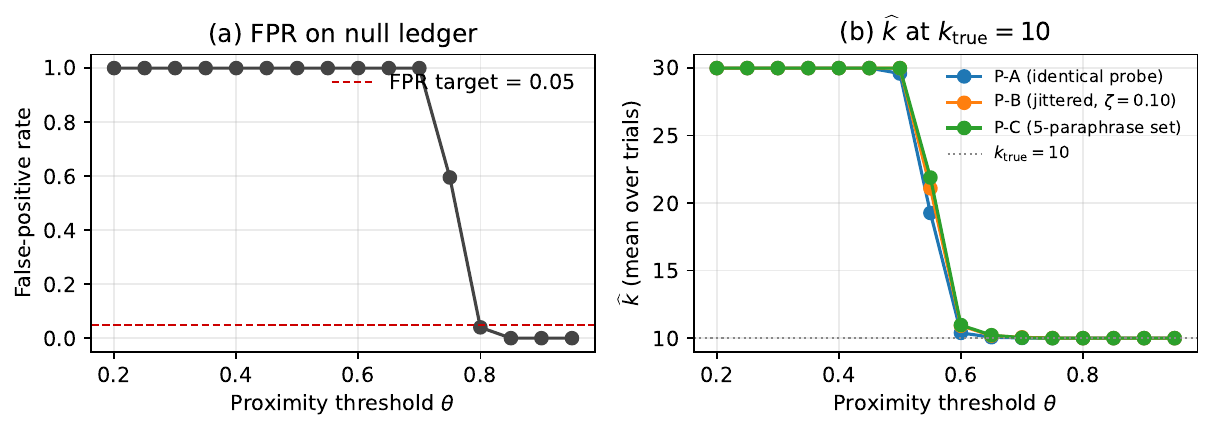}
    \caption{Coalition-size estimator calibration on the toy harness ($A = 30$, $n = 100$, $d = 32$, $200$ trials per cell).
    \textbf{(a)} Null FPR as a function of the proximity threshold $\theta$; the operating point $\theta^* = 0.80$ achieves FPR $= 0.040$.
    \textbf{(b)} Mean $\hat{\colluders}$ at $k_{\text{true}} = 10$ for the three attack patterns; the estimator transitions from a merged graph ($\hat{\colluders} = A$ at low $\theta$) to exact recovery ($\hat{\colluders} = 10$) across the same window. At $\theta^*$, TPR $= 1.00$ at all four tested $k_{\text{true}} \in \{2,5,10,20\}$ for all three patterns, with $\hat{\colluders}$ exactly recovering $k_{\text{true}}$.}
    \label{fig:estimator-calibration}
\end{figure}

\paragraph{Certificate} $\pi_{\text{coal}}$ is a ZK proof that $\hat{\colluders}$ is the output of the agreed estimator on the ledger contents (witness: the queries and their cluster assignments; public input: $C_{\text{ledger}, W}$ and the estimator's parameters bound by $C_{\text{policy}}$). This forces $\Provider$ to commit to a coalition estimate consistent with the public ledger.

\subsection{Phase~D: verification}
\label{subsec:audit-verify}

The verifier $\Auditor$ executes:
\begin{enumerate}[leftmargin=*]
\item \emph{Commitment integrity}: all of $C_{\text{emb}}, C_{\text{idx}}, C_{\text{seed}}, C_{\text{policy}}, C_{\text{ledger}, 0}$ were published before $W$ opened. For each sampled proof $\pi_q$, $\Auditor$ verifies the per-query noise-correctness component $\pi_{\text{noise}}^{(q)}$ (G4) against $C_{\text{seed}}$ and the public record $\mathsf{rec}_q$, without learning $s_W$. Index openings in $\pi_q$ verify against $C_{\Index_{t(a)}}$.
\item \emph{Per-query soundness}: for each $r_q$ in the ledger, $\Pi.\mathsf{Verify}(\pi_q, \cdot) = 1$ for claims \ref{att:emb}--\ref{att:tenant}. Sampled mode: $\Auditor$ verifies $s$ uniformly-random records and reports a property-testing guarantee --- any $\beta$-fraction of non-conforming records is detected with probability $\geq 1 - (1-\beta)^s$ (\Cref{subsec:audit-cost}). This does not certify worst-case soundness; rare bad events ($\beta$ near $1/|W|$) need full verification.
\item \emph{Ledger integrity + receipt consistency}: $C_{\text{ledger}, W}$ is a valid append-only extension of $C_{\text{ledger}, 0}$ (G2). For each submitted receipt $\rho_q$, $\Auditor$ verifies $\mathsf{Sig}_{\Provider}$ + Merkle inclusion at $\mathsf{pos}_q$ in $C_{\text{ledger}, W}$, and that the receipted $(H(q), H(\text{top-}K))$ match the ledger record. A receipt without a consistent ledger record is a forgery witness; the audit fails.
\item \emph{Policy-cap check}: $\Pi.\mathsf{Verify}(\pi_{\text{coal}}, \cdot) = 1$ and the estimator's largest valid cluster $c \leq \colluders_{\max}$. If $c > \colluders_{\max}$ the audit fails: the deployment violates its declared coalition cap.
\item \emph{Leakage bound}: $\delta_{\text{policy}}$ is the target final joint failure probability. Set the composition slack $\delta_{\text{comp}} = \delta_{\text{policy}} - \colluders_{\max} \delta_{\text{acc}}$ ($C_{\text{policy}}$ enforces $\delta_{\text{policy}} > \colluders_{\max} \delta_{\text{acc}}$). Plug $\colluders_{\max}, \delta_{\text{comp}}$ into \Cref{thm:upper}:
\begin{equation}
\label{eq:eps-audit}
\eps_{\text{audit}} \;=\; \sqrt{\colluders_{\max}} \cdot \eps_{\text{acc}} \cdot \sqrt{\frac{\log(1/\delta_{\text{comp}})}{\log(1/\delta_{\text{acc}})}} + \mathcal{O}\!\left(\frac{\colluders_{\max} \eps_{\text{acc}}^2}{\log(1/\delta_{\text{acc}})}\right).
\end{equation}
The reparametrization makes the final joint failure probability exactly $\delta_{\text{policy}}$; $\colluders_{\max}$ rather than $c$ keeps soundness independent of estimator adversarial robustness.
\end{enumerate}

$\Auditor$ outputs $(\textsf{PASS}, \eps_{\text{audit}})$ if all checks pass, else $(\textsf{FAIL}, \cdot)$.

\subsection{Security argument}
\label{subsec:audit-security}

\begin{theorem}[Audit soundness --- formal DP guarantee]
\label{thm:audit-soundness}
\emph{Meaningful-verdict conditions:} non-trivial $\colluders_{\max}$, M4 (\Cref{rem:access-control}) for cross-tenant or external regimes; generation-channel privacy is out of scope (\Cref{subsec:scope-of-dp}). Under these and $\Pi$ sound and zero-knowledge (the ZK property is required to keep $s_W$ hidden in $\pi_{\text{noise}}$; see \Cref{rem:hidden-seed}), $H$ collision-resistant, $\mathsf{VCom}$ binding, $\mathsf{DGS}$ statistically close to $\mathcal{N}(0,\sigma^2 I)$: \textbf{Mode~A.} With full per-record verification and at least one honest-account receipt sample (M5): if $\Auditor$ outputs $(\textsf{PASS}, \eps_{\text{audit}})$ \emph{and the true coalition size $c_{\text{true}} \leq \colluders_{\max}$}, the deployed retrieval-channel mechanism $\mathsf{Retr}$ \eqref{eq:retrieval} is $(\eps_{\text{audit}}, \delta_{\text{policy}})$-collusion-DP against any single tenant in $W$. \textbf{Mode~B.} With $s$ uniformly-sampled records: any $\beta$-fraction violation of \ref{att:emb}--\ref{att:tenant} is detected with probability $\geq 1 - (1-\beta)^s$; $\eps_{\text{audit}}$ is conditional on conformance of unsampled records. The bound is independent of the empirical estimator output $\hat{\colluders}$ and of the coalition's pattern class.
\end{theorem}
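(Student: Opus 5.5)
The plan is a reduction to \Cref{thm:upper}. I will show that a \textsf{PASS} verdict, under the stated cryptographic assumptions, implies (up to negligible probability) that the deployed $\mathsf{Retr}$ lies in the Gaussian-noised score-release class of \Cref{def:gaussian-class} with the committed parameters. I then instantiate \Cref{thm:upper} at $\colluders=\colluders_{\max}$ and $\delta=\delta_{\text{comp}}$. The argument has three layers: cryptographic extraction, class membership, and DP accounting.

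\emph{Step 1 (extraction/binding, Mode~A).} Fix any PPT provider strategy that yields \textsf{PASS} with full verification. By soundness of $\Pi$, each accepted $\pi_q$ implies claims \ref{att:emb}--\ref{att:tenant} hold for $r_q$, except with negligible probability. By binding of $H$ and $\mathsf{VCom}$, the embedder, policy parameters $\sigma,\eps_{\text{acc}},\delta_{\text{acc}},\colluders_{\max},\delta_{\text{policy}}$, the seed $s_W$ and the per-tenant embeddings are fixed before $W$. A union bound over $|W|$ records keeps the total failure probability negligible; this term is absorbed into $\delta_{\text{policy}}$ or treated as a computational caveat. Ledger completeness needs M5 and receipt consistency (check 3). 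I would argue that any query answered to an account but omitted from or altered in the ledger yields either a signature forgery or a Merkle-binding break. Thus the set of attested records is the set of retrievals the coalition observes, at least for the receipt-sampled honest account. I expect this to be the weakest link: M5 only samples one honest account. So I would state completeness as holding for every account whose receipts are submitted, and I would treat coalition accounts' unlogged queries as excluded by the per-account rate limit bound in $C_{\text{policy}}$. I would flag this as an explicit assumption rather than prove it.

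\emph{Step 2 (class membership).} Given the attested relations, each response is $\mathsf{TopK}(\mathsf{Score}(q,E_{t(a)})+z_q)$, with $z_q=\mathsf{DGS}(H(s_W\Vert\mathsf{rec}_q);\sigma)$. Tenant containment \ref{att:tenant} confines leakage about $\Index_{t^*}$ to same-tenant accounts, or to M4 accounts. Two properties remain to be checked. First, the $z_q$ must behave as independent $\mathcal{N}(0,\sigma^2 I)$ draws (M1). I would model $H(s_W\Vert\cdot)$ as a random oracle, or a PRF keyed by the hidden $s_W$. Distinct $\mathsf{rec}_q$ (timestamps make them distinct) then give computationally independent uniform inputs, and statistical closeness of $\mathsf{DGS}$ transfers this to Gaussian noise. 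This uses the zero-knowledge property of $\Pi$ (\Cref{rem:hidden-seed}): the verifier's view is simulatable without $s_W$, so it cannot reduce the noise's entropy. Second, the $\sigma$ bound in $C_{\text{policy}}$ must calibrate per-query $(\eps_q,\delta_q)$ as in \eqref{eq:per-query-budget}; this is checked arithmetically from public parameters. Noise-then-select ordering \ref{att:order} makes top-$K$ a post-processing of the noisy score vector.

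\emph{Step 3 (accounting).} \Cref{lem:reduction} collapses any coalition of true size $c_{\text{true}}\le\colluders_{\max}$ to one analyst with at most $\colluders_{\max}n$ adaptive $(\eps_q,\delta_q)$ queries. Monotonicity of \eqref{eq:upper-bound} in $\colluders$ lets me use $\colluders_{\max}$ in place of $c_{\text{true}}$. \Cref{thm:upper} with $\delta=\delta_{\text{comp}}$ then gives $(\eps_{\text{audit}},\delta_{\text{comp}}+\colluders_{\max}\delta_{\text{acc}})=(\eps_{\text{audit}},\delta_{\text{policy}})$. Here $\delta_{\text{comp}}\ge\delta_{\text{acc}}$ would need to be added to the well-formedness constraint, or \Cref{thm:upper} checked to hold for smaller $\delta$. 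Post-processing covers top-$K$. $\hat{\colluders}$ never enters this argument, which yields the independence claim.

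\emph{Mode~B.} A standard sampling bound applies: if a $\beta$-fraction of records is nonconforming, all $s$ uniform samples avoid them with probability $(1-\beta)^s$. Conditional on unsampled records conforming, Steps 1--3 apply verbatim.
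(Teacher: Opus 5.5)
Your proposal is correct and takes essentially the paper's route: soundness of $\Pi$, binding of $H$ and $\mathsf{VCom}$, and zero-knowledge for the hidden seed turn the attestations \ref{att:emb}--\ref{att:tenant} into membership of every attested query in the Gaussian-noised score-release class. \Cref{thm:upper} is then applied at $\colluders=\colluders_{\max}$ with slack $\delta_{\text{comp}}=\delta_{\text{policy}}-\colluders_{\max}\delta_{\text{acc}}$, the estimator never enters, and Mode~B is the sampling bound. Your additions (the union bound, random-oracle/PRF modeling of the noise derivation, monotonicity in $\colluders$, and ledger completeness as an explicit assumption) only make explicit what the paper leaves implicit, and your $\delta_{\text{comp}}<\delta_{\text{acc}}$ worry is harmless because the proof of \Cref{thm:upper} uses only \Cref{lem:advcomp}, which holds for any $\delta'\in(0,1)$.
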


\begin{proof}[Sketch -- full proof in appendix]
Soundness of $\Pi$ + \ref{att:emb} forces the artifact bound by $C_{\text{emb}}$. \ref{att:noise} + pre-window $C_{\text{seed}}$ pins $z_q$ to the sampler output on the committed seed and ledger record, statistically close to a fresh $\mathcal{N}(0,\sigma^2 I)$ draw; ZK of $\Pi$ ensures $\Auditor$'s view is independent of $s_W$ given $C_{\text{seed}}$, preserving the DP mechanism's hidden-randomness assumption (\Cref{rem:hidden-seed}). Claim \ref{att:order} pins each attested query to the Gaussian-noised score-vector retrieval mechanism of \Cref{def:gaussian-class}: scores are computed over the committed tenant index, independent Gaussian noise with parameter $\sigma$ from $C_{\text{policy}}$ is added coordinatewise, and top-$K$ selection is deterministic post-processing. The scalar mechanism \eqref{eq:lb-mechanism} is used only for the lower-bound construction, not as the audited production mechanism; by \Cref{def:gaussian-class} and \eqref{eq:per-query-budget}, each attested query is $(\eps_q, \delta_q)$-DP. Claim \ref{att:tenant} confines responses to $\Index_{t(a)}$, so neighboring-index changes outside tenant $t^*$ do not affect the transcript --- privacy reduces to the $\colluders$-collusion analysis of \Cref{sec:dp-analysis}. The bound $\eps_{\text{audit}}$ \eqref{eq:eps-audit} is computed from $\colluders_{\max}$ and the composition slack $\delta_{\text{comp}}=\delta_{\text{policy}}-\colluders_{\max}\delta_{\text{acc}}$; under the hypothesis $c_{\text{true}} \leq \colluders_{\max}$, \Cref{thm:upper} at $\colluders = \colluders_{\max}$ yields the claimed final $(\eps_{\text{audit}},\delta_{\text{policy}})$ bound. The policy-cap check rejects audits where the estimator's empirical cluster $c$ exceeds $\colluders_{\max}$, but soundness does not depend on $c$ tracking $c_{\text{true}}$.
\end{proof}

\begin{remark}[Operational audit claim]
\label{rem:operational-audit}
\Cref{thm:audit-soundness} establishes the DP guarantee \emph{conditional on} $c_{\text{true}} \leq \colluders_{\max}$. Enforcing this hypothesis in practice is the estimator's job: \Cref{lem:estimator-guarantee} shows that \emph{only on the synthetic patterns P-A, P-B, P-C}, at the calibrated threshold $\theta^* = 0.80$, the estimator catches every violation in our experimental scope (TPR $= 1.00$ over $200 \cdot 4 \cdot 3 = 2400$ trials). This is an empirical guarantee on a finite pattern class, not a soundness claim against arbitrary coalitions. For coalitions whose query distribution sits within these three patterns, a passing audit therefore implies $c_{\text{true}} \leq \colluders_{\max}$ with high empirical confidence; combining with \Cref{thm:audit-soundness} yields the $(\eps_{\text{audit}}, \delta_{\text{policy}})$-collusion-DP claim on the deployment \emph{against in-pattern adversaries only}. For out-of-pattern (adversarially diversified) coalitions, the estimator may produce a false negative: \Cref{thm:audit-soundness} still holds if $c_{\text{true}} \leq \colluders_{\max}$, but the audit may PASS on a deployment with $c_{\text{true}} > \colluders_{\max}$, in which case the DP guarantee does not apply. We do \emph{not} claim a deployment-level guarantee against adversaries outside P-A/B/C; closing this gap requires either an estimator with provable adversarial robustness or a regulatory floor on $\colluders_{\max}$.
\end{remark}

\paragraph{What the protocol does \emph{not} guarantee} See \Cref{rem:operational-audit} (estimator evasion) and \Cref{subsec:limitations} (ledger completeness via M5; embedder determinism via M3).

\subsection{Cost}
\label{subsec:audit-cost}

Full per-query verification is too expensive for production-rate windows, so the protocol is meant to be run in sampled mode. The optimized Groth16 model in \Cref{app:cost-model} gives roughly $8.8$M constraints and $\sim9$ s proving time per query at $N=10^5$ documents, with constant verifier time ($\sim8$ ms/proof). Sampling $s=\lceil\log(1/\eta)/\beta\rceil$ records detects any $\beta$-fraction violation with probability $1-\eta$; the policy point $\beta\geq0.01,\eta=2^{-20}$ gives $s\approx1{,}400$ and $\sim11$ s verifier time per audit window, independent of $|W|$.

\section{Related Work}
\label{sec:related}

RAG membership-inference work establishes the attack surface but is almost entirely single-account: S$^2$MIA, difficulty-calibrated MIA, generation-artifact MIA, mask-based MIA, stealthy low-query attacks, and differential-calibration MIA all operate at $\colluders=1$ \parencite{Zeng2024-good,Li2025-generating,Wang2025-leaks,Feng2025-leak,Liu2024-mask,Naseh2025-riddle,Gao2025-dcmi}. Our results make the coalition size explicit: their $k=1$ attacks are the base case, while the relevant production benchmark is $\sqrt{k}\eps_{\text{acc}}$. Embedding inversion is orthogonal \parencite{Morris2024-textembeddings}; it composes with, rather than replaces, the retrieval-channel leakage analyzed here.

The $\Theta(\sqrt{k})$ composition rate is standard DP theory, including advanced/concurrent composition, RDP, and privacy odometers \parencite{DworkRoth2014,DworkRothblumVadhan2010,KairouzOhViswanath2017,Mironov2017,AbadiCCS2016,VadhanWang2021-concurrent,Rogers2016-odometers}. Federated-learning collusion targets training-time aggregation \parencite{Pasquini2022-eluding}; RemoteRAG-style defenses modify retrieval \parencite{Cheng2025-remoterag}. Our contribution is to align the known composition phenomenon with multi-tenant RAG's noise-then-select retrieval channel and to turn it into a ZK-auditable predicate using standard SNARK/Merkle machinery \parencite{Groth2016,BenSasson2014-zerocash,Setty2020-spartan}.

\section{Limitations and Scope}
\label{sec:discussion}
\label{subsec:limitations}

\paragraph{Retrieval channel only}\label{subsec:scope-of-dp}
\Cref{thm:upper,thm:audit-soundness} certify $\mathsf{Retr}$ in \eqref{eq:retrieval}, not the full output $\mathsf{LLM}(q,\mathsf{Retr}(q,a))$; generation re-accesses private documents after selection and needs a separate audit predicate.

\paragraph{Coalition cap and estimator}
The theorem is conditional on the true coalition size $c_{\text{true}}\leq\colluders_{\max}$. The query-correlation estimator gates policy compliance and is empirically calibrated for P-A/B/C, but adversarially diversified coalitions can evade it. A passing audit is therefore a deployment-level guarantee only when the declared $\colluders_{\max}$ is meaningful and the adversary class is within the estimator's scope.

\paragraph{Sampling and implementation cost}
Sampled verification detects bulk non-conformance but not rare bad records; full per-record verification is currently too costly at production rates. The Groth16 numbers are model estimates, not a deployed microbenchmark.

\paragraph{Embedder trust}
The protocol binds the embedder through $C_{\text{emb}}$ but does not prove that the embedder is resistant to inversion or other representation-level leakage \parencite{Morris2024-textembeddings}.

\paragraph{Disclosure}\label{subsec:disclosure}
The issue is structural rather than implementation-specific: per-account DP should not be advertised as the operational boundary unless the service also declares and audits a coalition cap $\colluders_{\max}$.

\appendix

\section{Additional experimental results}
\label{app:additional-experiments}

\subsection{Trained-embedder check}
\label{subsec:real-embedder}

The experiments in §\ref{subsec:headline}--§\ref{subsec:rate-transfer} use random unit-sphere embeddings at dimension $d=32$. To check that the $\sqrt{\colluders}$ rate transfers to a trained embedder (and quantify the constant-factor change), we re-ran the scalar-mechanism attack of \Cref{subsec:headline} with \texttt{BAAI/bge-small-en-v1.5} (384-dim, MTEB-standard sentence embedder) over a hardcoded $50$-paragraph factual corpus. We select a target/decoy pair with realized score gap closest to $\Delta = 0.40$ (vs.\ $\Delta = 1.0$ in §\ref{subsec:headline}, the degenerate maximum-signal case): the picker selects ``The Earth orbits the Sun \ldots'' as target and ``The Renaissance began in Italy \ldots'' as decoy, yielding $\Delta = 0.41$. Probe vector is the target embedding ($q^* = x_*$, matching the lower-bound construction). Other parameters: $\eps_{\text{acc}} = 4$, $\delta_{\text{acc}} = 10^{-6}$, $n = 200$ queries per account, $T = 2\,000$ trials per cell.

The empirical advantage grows from $0.032$ at $\colluders = 1$ to $0.150$ at $\colluders = 20$ --- a $4.69\times$ growth against predicted $\sqrt{20} \approx 4.47$; the $\sqrt{\colluders}$ scaling holds within finite-sample noise. The absolute advantage sits below the $\Delta = 1$ random-embedder baseline: at $\colluders = 20$, bge-small AUC is $0.575$ vs random's $0.638$. The advantage ratio ($0.150 / 0.275 = 0.54$) tracks the realized-$\Delta$ ratio ($0.41 / 1.00$) within ${\sim}30\%$, consistent with the closed-form linear-in-$\Delta$ prediction.

\paragraph{Target-distribution robustness} To check that the single-pair number is not idiosyncratic, we repeat over $10$ random $(\text{target}, \text{decoy})$ pairs from the same corpus with $\Delta \in [0.41, 0.49]$ at $\eps_{\text{acc}} = 4$, $T = 2\,000$ per cell.\footnote{Driver \texttt{run\_real\_embedder\_multi\_target.py}; full paths in the artifact repository.} The empirical AUC distribution across pairs is tight: at $\colluders = 1$ the median is $0.515$ (IQR $[0.504, 0.520]$, full range $[0.498, 0.524]$); at $\colluders = 20$ the median is $0.573$ (IQR $[0.563, 0.577]$, full range $[0.558, 0.581]$). The median advantage grows by a factor of $4.93\times$ across the sweep, slightly exceeding the predicted $\sqrt{20}\approx 4.47$ due to mild $\Phi$-saturation; the single-pair number ($0.575$ at $\colluders = 20$) sits inside the multi-pair range. The $\sqrt{\colluders}$ rate transfer is not an artifact of a single lucky pair.

\subsection{Alternative-adversary robustness check}
\label{subsec:adversarial-robustness}

To check whether any nearby alternative adversary beats the $\sqrt{\colluders}$ rate, we run two variants against the scalar mechanism at $\eps_{\text{acc}}=4$. (i)~\emph{Bayes LR}: replaces the pooled-mean statistic with the exact log-likelihood-ratio test; because LR is a monotone function of the pooled mean, AUC matches baseline to four decimal places at every $\colluders \in \{1, 2, 5, 10, 20\}$ --- the pooled-mean test is already at the LR-family optimum. (ii)~\emph{Diversified multi-target split}: a fraction $\rho$ of the $\colluders$ colluders point at the true target while the remaining $\colluders(1-\rho)$ probe random orthogonal directions, modeling hedging without target knowledge. At $\colluders = 20$, baseline AUC $= 0.638$, $\rho = 0.50$ gives $0.593$, $\rho = 0.25$ gives $0.580$; orthogonal probes contribute zero signal and the wasted budget compounds. Neither variant exceeds the baseline rate; combined with \Cref{thm:upper}'s upper bound, $\sqrt{\colluders}$ is empirically tight in the neighborhood of the §\ref{subsec:headline} attack. Adaptive query loops and embedder-inversion oracles are not exercised; \Cref{thm:upper} bounds both unconditionally.

\subsection{External vs.\ same-tenant collusion}
\label{subsec:external-collusion}

Under M4 (\Cref{rem:access-control}) the three collusion regimes reduce to the same per-query analysis at the same rate $\Theta(\sqrt{\colluders})$. We test the M4-failure regime empirically: external models $\colluders$ attacker accounts outside the victim tenant, each with a fresh per-account budget, querying the victim index via a shared retrieval backend (M4 active). Each attacker account applies fresh Gaussian noise per query calibrated by \eqref{eq:per-query-budget}; sweep parameters: $\colluders \in \{1,2,5,10,20\}$, $\eps_{\text{acc}} \in \{4, 8, 16\}$, $n = 200$, $T = 2{,}000$ trials, $50$ background docs, $K = 5$.

\Cref{fig:external-vs-same} shows the two regimes are statistically indistinguishable \emph{under M4}: across all 15 cells the absolute AUC difference is at most $0.022$ (median $0.010$), within one stderr ($\approx 0.011$). At the headline cell $\colluders=20, \eps_{\text{acc}}=16$, external AUC $= 0.810$ vs same-tenant $0.811$. Empirical $\sqrt{\colluders}$ growth across $\colluders=1{\to}20$: external $3.95\times$, same-tenant $4.11\times$, predicted $\sqrt{20}{\approx}4.47\times$. Read correctly: this quantifies external-collusion damage \emph{conditional on access-control failure}, not against ordinary external attackers, who have zero same-index access by \Cref{eq:retrieval} and zero leakage about $\Index_{t^*}$. Tenant-boundary access control \emph{is} the first line of defense; the audit-protocol coalition estimator (clustering by query similarity, not by tenant) is the second.

\begin{figure}[t]
    \centering
    \includegraphics[width=\columnwidth]{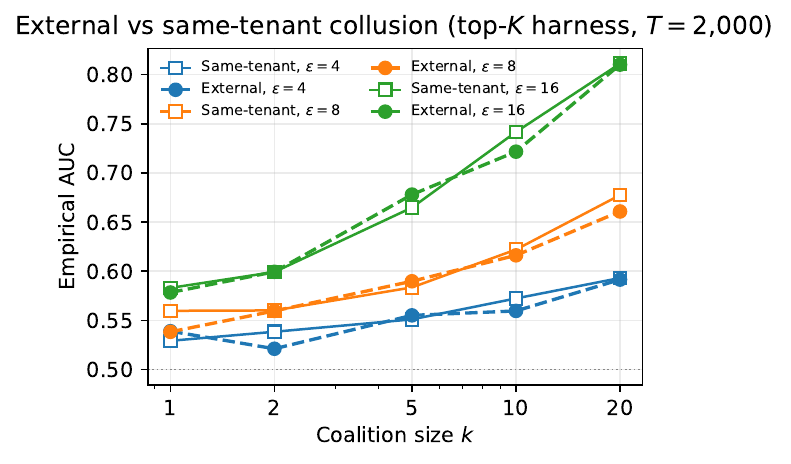}
    \caption{External vs same-tenant collusion against the top-$K$ harness at matched $(\colluders, \eps_{\text{acc}})$. Same-tenant baseline (open squares, solid) and external (filled circles, dashed) at three $\eps_{\text{acc}}$. Curves overlap across the full sweep; absolute AUC differences are bounded by $0.022$ (median $0.010$), within one stderr at $T = 2{,}000$.}
    \label{fig:external-vs-same}
\end{figure}

\section{Extended R\'enyi DP analysis}
\label{app:rdp-details}

\begin{definition}[R\'enyi DP \parencite{Mironov2017}]
\label{def:rdp}
For $\alpha > 1$, a mechanism $\mathcal{M}$ is $(\alpha, \rho(\alpha))$-RDP if for all neighboring $\Index \sim \Index'$,
\(D_\alpha(\mathcal{M}(\Index) \,\Vert\, \mathcal{M}(\Index')) \leq \rho(\alpha),\)
where $D_\alpha$ is the R\'enyi divergence of order $\alpha$.
\end{definition}

Three facts about RDP we use without proof (all from \parencite{Mironov2017}):
\begin{enumerate}[leftmargin=*,label=(R\arabic*)]
    \item \label{rdp:gauss-app} \emph{Gaussian mechanism.} The Gaussian mechanism with sensitivity $\Delta$ and noise $\mathcal{N}(0, \sigma^2)$ is $(\alpha, \alpha \Delta^2/(2\sigma^2))$-RDP for all $\alpha > 1$.
    \item \label{rdp:comp-app} \emph{Composition.} The $m$-fold adaptive composition of $(\alpha, \rho_i(\alpha))$-RDP mechanisms on the same database is $(\alpha, \sum_i \rho_i(\alpha))$-RDP, linear in $m$ with no $\sqrt{\log(1/\delta)}$ overhead.
    \item \label{rdp:conv-app} \emph{Conversion.} $(\alpha, \rho)$-RDP implies $(\rho + \log(1/\delta)/(\alpha-1), \delta)$-DP for any $\delta > 0$.
\end{enumerate}

\paragraph{Intermediate bound} The proof of \Cref{thm:upper-rdp} (\Cref{app:proof-upper-rdp}) yields the intermediate $\eps_\colluders^{\text{RDP}} = \Delta \sqrt{2\colluders n \log(1/\delta)}/\sigma + \colluders n \Delta^2 / (2 \sigma^2)$ before calibration substitution.

\paragraph{Improvement over \Cref{thm:upper}} The leading rate is identical, and the RDP route shrinks the residual by a constant factor of $4$ (from $\colluders \eps_{\text{acc}}^2/\log(1/\delta_{\text{acc}})$ to $\colluders \eps_{\text{acc}}^2/(4\log(1/\delta_{\text{acc}}))$, $n$-independent, linear in $\colluders$).

\paragraph{Numerical privacy tracking} The RDP form admits exact numerical evaluation via the moments accountant of \parencite{AbadiCCS2016}: we update $\rho(\alpha)$ on a grid $\alpha \in \{2, \ldots, 64\}$ per query and convert to $(\eps, \delta)$ at audit-window close. This provides tighter realized $(\eps, \delta)$ pairs than the closed-form bound at moderate $\colluders$ and is what the audit pipeline reports. We do not claim that query correlation reduces worst-case DP cost (it does not for a fixed Gaussian mechanism with fixed sensitivity); the tightening relative to the closed-form is purely arithmetic.

\section{Audit-protocol cost model}
\label{app:cost-model}

\begin{table}[h]
    \centering
    \caption{Estimated Groth16 per-query costs for the noise-then-select audit circuit at $d = 384$, top-$K = 5$. Optimized circuit pre-commits $C_{\text{sc},q}$, proves score-openings consistent with $C_{\Index_{t(a)}}$ on the $K$ selected positions, and additionally runs a Spartan-style score-correctness sumcheck attesting the committed score vector at \emph{all} $N$ positions; naive recomputes all $N$ inner products in-circuit. Proving and setup derived from $\sim 1\,\mu$s/constraint; verification constant per proof modulo $O(d \log N)$ sumcheck field ops.}
    \label{tab:zk-cost}
    \footnotesize
    \begin{tabular}{l r r r r}
    \toprule
    Mode & Index size $N$ & Constraints & Prove (s) & Setup (s) \\
    \midrule
    Optimized & $10^3$ & $9.8\!\times\!10^4$ & 0.1 & 0.05 \\
    Optimized & $10^4$ & $8.9\!\times\!10^5$ & 0.9 & 0.4 \\
    Optimized & $10^5$ & $8.8\!\times\!10^6$ & 8.8 & 4.4 \\
    Optimized & $10^6$ & $8.8\!\times\!10^7$ & 88 & 44 \\
    \midrule
    Naive & $10^3$ & $8.7\!\times\!10^5$ & 0.9 & 0.4 \\
    Naive & $10^4$ & $8.6\!\times\!10^6$ & 8.6 & 4.3 \\
    Naive & $10^5$ & $8.6\!\times\!10^7$ & 86 & 43 \\
    Naive & $10^6$ & $8.6\!\times\!10^8$ & 856 & 428 \\
    \bottomrule
    \end{tabular}
\end{table}

\section{Policy implications}
\label{app:policy}

The protocol's $(\textsf{PASS}, \eps_{\text{audit}})$ output fits a slot several frameworks contemplate but have not specified --- third-party verifiable guarantees without access to model or training data (EU AI Act Arts.~13/50, DSA Arts.~34--35) --- and we argue $\eps_{\text{audit}}$, not the per-account budget, is the right unit of disclosure. Absent mandate, three voluntary incentives apply: \emph{reputation} (verifiable trust for enterprise customers), \emph{early adoption} ahead of the AI Act's 2026--2027 high-risk phase-in, and \emph{liability reduction} from a contemporaneous defensible record. The protocol does not bind $\colluders_{\max}$: $\colluders_{\max} = 1$ trivially passes, recovering the status quo. Closing this gap requires regulatory floors or industry-consortium norms, outside our scope.

\section{Future work}
\label{app:future-work}

Beyond \Cref{subsec:limitations}: (i)~analytic-Gaussian RDP would close the residual constant-factor gap between \Cref{thm:upper-rdp} and \Cref{thm:lower}; (ii)~extension to RAG-augmented agents with persistent memory, where temporal accumulation composes with cross-account coalition; (iii)~Stackelberg analysis of the auditor-provider interaction is a natural sequel; (iv)~the most operationally important open question is whether the coalition-size estimator can be made adversarially robust without prohibitive ZK machinery.

\section{Full proofs}
\label{app:proofs}

This appendix expands the four proof sketches of \Cref{sec:dp-analysis,subsec:audit-security} into self-contained arguments. We retain the body sketches as quick references; nothing in the appendix changes the theorem statements.

\subsection{Proof of \Cref{lem:reduction} (Coordination reduction)}
\label{app:proof-reduction}

\begin{proof}
We construct an adaptive analyst $\Adv^\dagger$ that, given the same posterior and output channel as $\Adv$, produces a transcript with the same distribution. Under M1, if colluders shared randomness the simulator's $\colluders n$ independent draws would carry more information and the equality $\tau_\Adv \stackrel{d}{=} \tau_{\Adv^\dagger}$ would degrade to a one-sided inequality --- still a valid upper bound.

\textit{Construction.} $\Adv^\dagger$ schedules $\colluders$ virtual accounts $\{a_i\}$ with per-account rate $r$. At each round it chooses an account $a_i$ with nonzero rate budget and a query $q$ from $\Adv$'s posterior (\ref{cap:posterior}), issues $q$, receives response $r$, and appends $(a_i, q, r)$ to its transcript.

\textit{Distribution.} The Gaussian mechanism of \eqref{eq:lb-mechanism} has per-query privacy depending only on $q$ and $\Index_{t^*}$, not on the account label. So $\Pr[\tau_{\Adv^\dagger}]$ and $\Pr[\tau_\Adv]$ marginalize to the same distribution; capabilities \ref{cap:posterior}, \ref{cap:output}, \ref{cap:adaptive}, \ref{cap:rate-aware} make the conditioning, channel, and scheduling latitude identical. Joint transcripts agree.

\textit{Three collusion regimes.} Same-tenant collusion: queries target $\Index_{t^*}$ directly, the construction applies verbatim. Cross-tenant/external collusion: under M4 (\Cref{rem:access-control}), colluding accounts share retrieval access to $\Index_{t^*}$ and the reduction goes through identically; without M4 the regimes have zero leakage about $\Index_{t^*}$ and the reduction is vacuous.
\end{proof}

\subsection{Proof of \Cref{thm:upper}}
\label{app:proof-upper}

\begin{proof}
By \Cref{lem:reduction}, it suffices to upper-bound the privacy of $\colluders n$ adaptive $(\eps_q, \delta_q)$-DP queries on $\Index_{t^*}$.

\textit{Composition.} Apply \Cref{lem:advcomp} with $m = \colluders n$, $\eps_0 = \eps_q$, $\delta_0 = \delta_q$, $\delta'$ chosen below:
\begin{equation}
\eps_\colluders(\delta') \leq \sqrt{2\colluders n \log(1/\delta')} \cdot \eps_q + \colluders n \cdot \eps_q(e^{\eps_q} - 1).
\label{eq:app-advcomp-substituted}
\end{equation}
For $\eps_q \in (0, 1]$, $e^{\eps_q} - 1 \leq 2\eps_q$, so the second term is bounded by $2 \colluders n \eps_q^2$.

\textit{Substituting the per-query calibration.} Equation \eqref{eq:per-query-budget} gives $\eps_q = \eps_{\text{acc}} / \sqrt{2n\log(1/\delta_{\text{acc}})}$. Identify $\delta'$ with $\delta$ (the body theorem's parameter is the composition slack; the final joint failure probability is $\delta + \colluders \delta_{\text{acc}}$, tracked below). Then
\begin{align*}
\sqrt{2\colluders n \log(1/\delta)} \cdot \eps_q
&= \sqrt{2\colluders n \log(1/\delta)} \cdot \frac{\eps_{\text{acc}}}{\sqrt{2n\log(1/\delta_{\text{acc}})}} \\
&= \sqrt{\colluders} \cdot \eps_{\text{acc}} \cdot \sqrt{\frac{\log(1/\delta)}{\log(1/\delta_{\text{acc}})}},
\end{align*}
matching \eqref{eq:upper-bound}'s dominant term. The second term of \eqref{eq:app-advcomp-substituted} satisfies
\(
2 \colluders n \eps_q^2 = \frac{\colluders \eps_{\text{acc}}^2}{\log(1/\delta_{\text{acc}})}.
\)

\textit{Failure-probability accounting.} The total $\delta$ budget consists of (i) the composition slack $\delta$, and (ii) the per-query failure probabilities, $\colluders n \cdot \delta_q = \colluders \delta_{\text{acc}}$ by \eqref{eq:per-query-budget}. Summing,
\(
\delta_{\text{joint}} = \delta + \colluders \delta_{\text{acc}},
\)
which matches the theorem statement.
\end{proof}

\subsection{Proof of \Cref{thm:upper-rdp}}
\label{app:proof-upper-rdp}

\begin{proof}
By \Cref{lem:reduction}, the joint mechanism is the $\colluders n$-fold composition of the per-query Gaussian mechanism of \eqref{eq:lb-mechanism} on $\Index_{t^*}$.

\textit{RDP composition.} By the Gaussian-mechanism RDP fact~\parencite{Mironov2017}, the per-query mechanism is $(\alpha, \alpha \Delta^2 / (2\sigma^2))$-RDP for all $\alpha > 1$. By the RDP composition fact, the $\colluders n$-fold composition is $(\alpha, \colluders n \alpha \Delta^2 / (2\sigma^2))$-RDP.

\textit{Optimal-order conversion.} Apply the RDP-to-DP conversion~\parencite{Mironov2017} to $\eps_\colluders^{\text{RDP}}(\alpha, \delta) = \colluders n \alpha \Delta^2 / (2\sigma^2) + \log(1/\delta) / (\alpha - 1)$. Setting $\partial/\partial\alpha = 0$ gives $\alpha^* - 1 = \sigma\sqrt{2\log(1/\delta)/(\colluders n \Delta^2)}$; substituting yields
\(
\eps_\colluders^{\text{RDP}}(\alpha^*, \delta) = \Delta \sqrt{2 \colluders n \log(1/\delta)}/\sigma + \colluders n \Delta^2 / (2\sigma^2).
\)

\textit{Calibration substitution.} The same $\sigma$ from \eqref{eq:per-query-budget} ($\sigma = \Delta\sqrt{2n\log(1/\delta_{\text{acc}})}/\eps_{\text{acc}}$) satisfies the RDP route at $m=n$, equating to $(\eps_{\text{acc}}, \delta_{\text{acc}})$. Substituting in the bound above:
\(
\Delta\sqrt{2\colluders n \log(1/\delta)}/\sigma = \sqrt{\colluders} \, \eps_{\text{acc}} \sqrt{\log(1/\delta)/\log(1/\delta_{\text{acc}})}
\)
and the residual $\colluders n \Delta^2/(2\sigma^2) = \colluders \eps_{\text{acc}}^2 / (4 \log(1/\delta_{\text{acc}}))$, a constant-factor improvement (1/4 vs.\ 1) over \Cref{thm:upper}'s residual, with leading-order coefficient matching \eqref{eq:upper-bound} exactly. This gives \eqref{eq:rdp-final}.
\end{proof}

\subsection{Proof of \Cref{thm:audit-soundness}}
\label{app:proof-audit-soundness}

\begin{proof}
Five attestations compose. (A1)~Opening $C_{\text{emb}}$ at audit close, plus collision-resistance of $H$ and soundness of $\Pi$ on \ref{att:emb}, pins retrieval to the committed embedder. (A2)~\ref{att:noise} verifies $z_q = \mathsf{DGS}(H(s_W \,\Vert\, \mathsf{rec}_q); \sigma)$ in zero knowledge under $C_{\text{seed}}$; pre-window binding of $C_{\text{seed}}$ rules out post-hoc seed selection, so $z_q$ is statistically close to $\mathcal{N}(0, \sigma^2 I)$~\parencite{BenSasson2014-zerocash}. The zero-knowledge property of $\Pi$ keeps $s_W$ outside $\Auditor$'s view, preserving the Gaussian-mechanism DP guarantee on the noised score channel (\Cref{rem:hidden-seed}). (A3)~\ref{att:order} pins $\mathsf{TopK}$ to score-then-noise over the embeddings opened from $C_{\text{idx}}$; binding-of-$\mathsf{VCom}$ on $C_{\Index_{t(a)}}$ (R6) prevents the prover from substituting a private witness index for the deployed one. The optimized instantiation of \Cref{app:cost-model} adds an aggregated score-correctness sumcheck attesting the committed score vector $C_{\text{sc},q}$ against the deployed embedding at \emph{all} $N$ positions (Schwartz-Zippel soundness $1 - N/|\mathbb{F}|$), closing the score-substitution gap that affects partial-opening instantiations. (A4)~\ref{att:tenant} restricts every $C_{\text{idx}}$-opening to $C_{\Index_{t(a)}}$. (A5)~$\pi_{\text{coal}}$ attests $c \leq \colluders_{\max}$; \Cref{lem:estimator-guarantee} calibrates detection on the tested patterns P-A/B/C (FPR $\leq 0.05$, TPR $= 1$ for $k_{\text{true}} \in \{2,5,10,20\}$), with evasive coalitions a residual gap (\Cref{rem:operational-audit}, \Cref{subsec:audit-security}). Combining (A1)--(A4) pins every attested query to the Gaussian-noised score-vector retrieval class of \Cref{def:gaussian-class} on the deployed index: scores are computed against the committed tenant index, independent Gaussian noise with the policy-bound scale is added coordinatewise, and top-$K$ is deterministic post-processing. The scalar mechanism \eqref{eq:lb-mechanism} is only the lower-bound construction. Applying \Cref{thm:upper} at $\colluders=\colluders_{\max}$ with composition slack $\delta_{\text{comp}}=\delta_{\text{policy}}-\colluders_{\max}\delta_{\text{acc}}$ yields \eqref{eq:eps-audit} and final failure probability $\delta_{\text{policy}}$. The bound is independent of empirical $c$: the estimator gates, it does not enter soundness.
\end{proof}

\section{Ethics Considerations}
\label{sec:ethics}

We address the four categories of ethical concern relevant to IEEE S\&P submissions.

\paragraph{Human subjects.} This work involves no human-subjects research. All experiments use synthetic corpora and toy multi-tenant indices generated by the author; no user queries, user data, or any personally identifiable information is collected, stored, or analyzed. No IRB review was therefore required, and none was sought.

\paragraph{Live systems.} Empirical evaluation does not exercise any third-party deployed RAG service. The coordinated-query attacks of \Cref{sec:attacks} are run exclusively against a local harness (\texttt{src/rag\_collusion\_privacy\_audit/harness}) over synthetic indices. No production RAG provider is measured, probed, or exploited; the threat model in \Cref{sec:threat-model} is described at the architectural level and references public documentation only.

\paragraph{Vulnerability disclosure.} The gap is structural and derivable from public DP composition theory specialized to RAG (\Cref{subsec:disclosure}); targeted provider notification or a coordinated embargo of the kind appropriate for software bugs is therefore not warranted. The arXiv preprint is public alongside the paper; the audit-side toolkit (\texttt{cryptographic-audit-protocols}) accompanies the camera-ready release. This paper contains no CVE identifiers.

\paragraph{Dual-use considerations.} The coordinated-attack methodology and audit protocol described here are dual-use: the same construction that lets an auditor bound coalition leakage could in principle guide an adversary in selecting query distributions for membership inference. The author has weighed this against the benefits in two ways. First, the structural nature of the gap (\Cref{thm:lower}) means it is derivable from public DP composition theory; withholding the analysis does not meaningfully delay an attacker who reads the published DP literature. Second, the audit protocol itself (\Cref{sec:audit-protocol}) is asymmetric in favor of defenders: providers and auditors can deploy it unilaterally, while attackers gain little from knowing the protocol exists. The released toolkit (\texttt{cryptographic-audit-protocols}) ships only the audit-side primitives, not the attack harness. The net effect of publication is assessed as protective rather than enabling.

\paragraph{Generative AI usage.} Generative AI was used for editorial purposes in this manuscript, and all outputs were inspected by the author to ensure accuracy and originality. Generative AI was not used as part of the research methodology: theorem statements, proofs, identification strategy, experimental design, and code were authored and validated by the author without generative AI assistance.

\printbibliography

\end{document}